\documentclass[11pt]{article}

\setlength{\textwidth}{16.5truecm}
\setlength{\evensidemargin}{-0.2truecm}
\setlength{\oddsidemargin}{-0.2truecm}
\setlength{\textheight}{21.7truecm}
\setlength{\topmargin}{0in}
\setlength{\headheight}{0in}
\setlength{\headsep}{0in}

\usepackage{color}
\usepackage{epsfig}
\usepackage{amsmath}
\usepackage{amsfonts}
\usepackage{amssymb}
\usepackage{enumerate}
\usepackage{graphics}
\usepackage{color}
\usepackage{times}

\usepackage[ruled, vlined, linesnumbered, noresetcount]{algorithm2e}

\DeclareMathOperator*{\argmin}{arg\,min}

\bibliographystyle{plain}

\newtheorem{example}{Example}

\newtheorem{theorem}{Theorem}

\newtheorem{remark}{Remark}

\newtheorem{lemma}{Lemma}
\newtheorem{definition}{Definition}


\newcommand{\remove}[1]{}

\def\T{{{\cal T}}}

\def\X{{\cal X}}
\def\m{{\mathfrak{m}}}
\def\k{{\mathfrak{K}}}
\def\VAL{{\mathit{VALUE}}}
\def\IND{{\mathit{INDEX}}}

\newcommand{\qed}{\hfill ~$\square$\bigskip}
\newcommand{\proof}{\noindent{\bf Proof.} }


\newcommand{\N}{{\mathbb{N}}}
\newcommand{\Infl}{{{\sf Influenced}}}

\newcommand{\squishlisttwo}{
 \begin{list}{$-$}
  { \setlength{\itemsep}{0pt}
    \setlength{\parsep}{0pt}
    \setlength{	\topsep}{0pt}
    \setlength{\partopsep}{0pt}
    \setlength{\leftmargin}{2em}
    \setlength{\labelwidth}{1.5em}
    \setlength{\labelsep}{0.5em} } }

\newcommand{\squishend}{
  \end{list}  }

\begin{document}

\date{}
\title{Time-Bounded Influence Diffusion with  Incentives}
\author{G. Cordasco\thanks{Department of  Psychology, University of Campania  ``L.Vanvitelli'', Italy},  L. Gargano\thanks{Department of Computer Science, University of Salerno, Italy}, J.G. Peters\thanks{School of Computing Science, Simon Fraser University, Canada}, A.A. Rescigno$^\dag$, U. Vaccaro$^\dag$}
\maketitle

\begin{abstract}
A widely studied model of influence diffusion in social networks represents the
network as a graph $G=(V,E)$ with an influence threshold $t(v)$ for each node.
Initially the members of an initial set $S\subseteq V$ are
influenced. During each subsequent round, the set of influenced nodes is
augmented by including every  node $v$ that has at least $t(v)$
previously influenced neighbours.
The general problem is to find a small initial set that influences the whole
network.
In this paper we extend this model by using \emph{incentives} to reduce the
thresholds of some nodes.
The goal is to minimize the total of the incentives required to ensure that
the process completes within a given number of rounds.
The problem is hard to approximate in general networks.
We present polynomial-time algorithms for paths, trees, and complete
networks.
\end{abstract}

\section{Introduction}

The \emph{spread of influence}
in social networks is the process by which individuals adjust their opinions,
revise their beliefs, or change their behaviours as a
result of  interactions with others (see  \cite{EK} and references therein quoted).
For example, \emph{viral marketing} takes advantage of peer influence among members of social networks for marketing \cite{DR-01}. The essential idea is that   companies  wanting to
promote products or behaviours might try to target and convince
a few individuals initially who will then trigger
a  cascade of further adoptions.
The intent of maximizing the spread of viral information across a network 
has suggested  several interesting  optimization  problems
 with various adoption paradigms. We refer to 
 \cite{CLC} for a recent 
discussion of the area.
In the rest of this section, we will explain and motivate our model of
information diffusion, 
describe our results, and discuss how
they relate to the existing literature.

\subsection{The Model}

A social network is a graph $G = (V,E)$, where the node set $V$ represents  the members of the network and $E$ represents the relationships among members.
We denote by $n=|V|$ the number of nodes, by  $N(v)$  the neighbourhood  of  $v$, and by 
 $d(v)=|N(v)|$ the degree of $v$, for each node  $v\in V$.

Let $t: V \to\N = \{1,2,\ldots\}$ be a function assigning integer thresholds
to the nodes of $G$;
we assume w.l.o.g.\ that $1\le t(v)\le d(v)$ holds for all $v\in V$.
For each node $v\in V$, the value $t(v)$ quantifies how hard it is to influence
 $v$, in the sense that easy-to-influence elements of the network have
``low'' $t(\cdot)$ values, and hard-to-influence elements have ``high''
$t(\cdot)$ values \cite{Gr}.
An {\em influence process in $G$} starting from a set $S\subseteq V$ of
initially influenced nodes is a sequence
of node subsets\footnote{We will omit the subscript $G$ whenever the graph
$G$ is clear from the context.},

\noindent $\Infl_G[S,0] = S$ 

\noindent $\Infl_G[S,\ell] = \Infl_G[S,\ell-1]\cup \Big\{v \,:\, \big|N(v)\cap \Infl_G[S,\ell - 1]\big|\ge t(v)\Big\},$ $\ell > 0$.
\\
Thus, in each round $\ell$, the set of influenced nodes is augmented by
including every uninfluenced node $v$ for which the number of \emph{already}
influenced neighbours is at least as big as $v$'s threshold $t(v)$.
We say that $v$ is influenced {\em at} round $\ell>0$ if
$v \in \Infl_G[S,\ell]\setminus \Infl_G[S,\ell - 1]$.
A target set for $G$ is a set $S$ such that it will influence the whole network,
that is, $\Infl_G[S,\ell]=V$, for some $\ell \geq 0$.

\smallskip
The classical {\em Target Set Selection   {\sc (TSS)}} problem having  as input a 
 network $G=(V,E)$ with thresholds $t:V\longrightarrow \mathbb{N}$, asks for 
 a  target set $S\subseteq V$ of \emph{minimum} size for $G$ \cite{ABW-10,Cic+}.
The TSS problem has roots in the general study
of the spread of influence in social networks (see~\cite{CLC,EK}).
For instance, in the area of viral marketing~\cite{DR-01}, companies  wanting to
promote products or behaviors might try to initially convince
a small  number of  individuals  (by offering  free samples or monetary rewards) who will then trigger
a cascade of influence in the social network leading to
the adoption  by a much larger number of individuals.

In this paper, we extend the classical model to make it more realistic. 
It was first observed in ~\cite{Dem14}  that the classical model limits
the optimizer to a binary choice between zero or
complete influence on each individual whereas customized incentives
could be 
more effective in realistic scenarios.
For example, a company promoting a new product may find that offering one hundred  free samples
is far less effective than offering a ten percent discount to one thousand   people. 

Furthermore, the papers mentioned above do not consider the time (number of rounds) necessary
to complete the influence diffusion process.
This could be quite  important in viral marketing;
a company may want to
influence its potential customers quickly before other companies can market
a competing product.

\smallskip
With this motivation, we formulate our model as follows.
%
An assignment of  incentives to the nodes
of a network $G=(V,E)$
is a function
$p: V \to\N_0 = \{0,1,2,\ldots\}$, where $p(v)$ is the amount of influence
initially applied on $v\in V$.
The effect of applying the incentive $p(v)$ on node $v$ is to decrease its
threshold, i.e., to make $v$ more susceptible to future influence.
It is clear that to start the process, there must be some nodes
for which the initially applied influences are at least as large as their
thresholds.  We assume, w.l.o.g., that $0 \leq p(v) \leq t(v) \leq d(v)$.
An influence process in $G$ starting with incentives given by a
function $p: V \to\N_0 = \{0,1,2,\ldots\}$ is a sequence
of node subsets 

\smallskip

\noindent $\Infl[p,0] = \{v \,:\, p(v)=t(v)\}$ 

\noindent $\Infl[p,\ell] = \Infl[p,\ell-1]\cup \Big\{v \,:\, \big|N(v)\cap \Infl[p,\ell - 1]\big|\ge t(v)-p(v)\Big\},$ $\ell > 0$.
\\
The cost of the incentive function $p:V\longrightarrow \N_0$ is $\sum_{v\in V} p(v)$.

Let $\lambda$ be a bound on the number of rounds available to complete the
process of influencing all nodes of the network.
The Time-Bounded Targeting with Incentives problem is 
to find incentives of minimum
cost which result in all nodes being influenced in at
most $\lambda$ rounds:

\smallskip

{\sc Time-Bounded Targeting with Incentives (TBI)}.

{\bf Instance:} A network $G=(V,E)$ with thresholds $t:V\longrightarrow \mathbb{N}$ and time bound $\lambda$.

{\bf Problem:} Find  incentives $p:V\longrightarrow \N_0$ of minimum
cost $\sum_{v\in V} p(v)$  s.t. $\Infl[p,\lambda] =V$.

\begin{figure}[tb!]
	\centering
		\includegraphics[width=\textwidth]{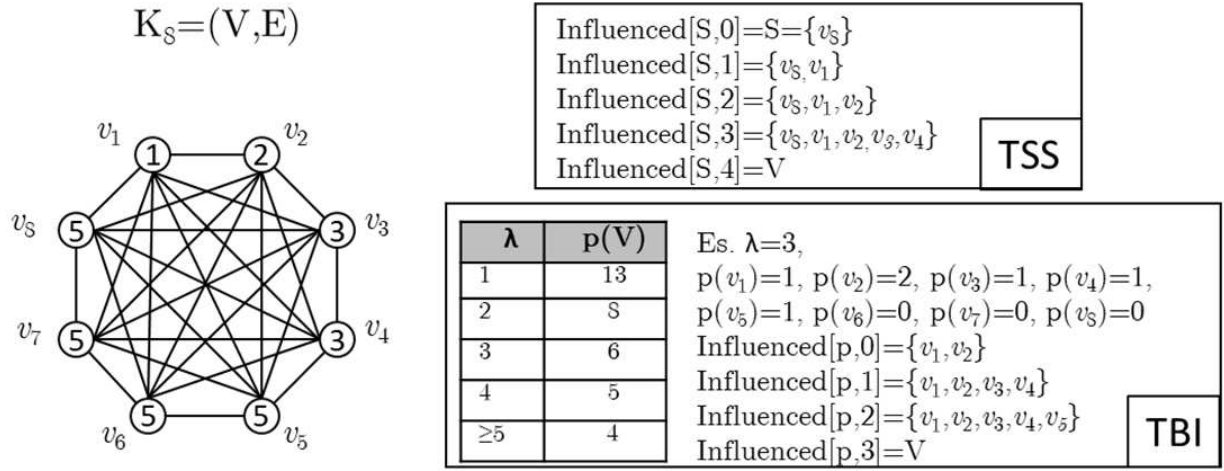}
		\caption{A complete graph $K_8$. The number inside each circle is the node threshold. Optimal solutions for the TSS problem and the TBI problem, with various values of $\lambda$, are shown. \label{example1} }
\end{figure}

\begin{example}
Solutions to the TBI problem can be quite different from solutions to
the TSS problem for a given network. 
Consider a complete graph $K_8$ on $8$ nodes with thresholds shown in  Fig. \ref{example1}. 
The optimal target set is $S{=}\{v_8\}$ which results in all nodes being influenced in $4$ rounds. 
The TBI problem admits different optimal solutions (with different incentive functions) depending on the value of $\lambda$, as shown in Fig. \ref{example1}.  
\end{example}

\subsection{Related Work and Our results}

The study of the spread of influence  in complex networks has  experienced a surge of interest in the last few years \cite{BBC14,CWY09,FGH12,GBL11,GLL11,GBLV13,HJBC14,LBGL13}. 
The algorithmic question of choosing the target set of size $k$ that activates the most number of nodes in the context of viral marketing was first posed by Domingos and Richardson \cite{DR-01}.
Kempe \emph{et al.} \cite{KKT03} started the study of this problem as a discrete optimization problem, and studied it in both the probabilistic independent cascade model and the threshold model of the influence diffusion process. They showed the NP-hardness of the problem in both models, and showed that a natural greedy strategy has a $(1 - 1/e - \epsilon)$-approximation guarantee in both models; these results were generalized to a more general cascade model in \cite{KKT05}.
However, they were mostly interested in networks with  randomly chosen thresholds.

In the  TSS problem, the size of the target set is not specified in advance, but the goal is to activate the entire network. 
Chen \cite{Chen-09} studied the TSS problem.
He proved  a strong   inapproximability result that makes unlikely the existence
of an  algorithm with  approximation factor better than  $O(2^{\log^{1-\epsilon }|V|})$.
Chen's result stimulated a series of papers including
\cite{ABW-10,BCNS,BHLN11,Centeno12,Chang,Chun,Chun2,Chopin-12,Cic14,Cic+,C-OFKR,CGRV17,CGR18,CGM+18,Fr+,GHPV13,Gu+,Li+,Mo+,NNUW,Re,W+,Za} 
that isolated
many interesting scenarios
in which the problem (and variants thereof) become tractable.
Ben-Zwi et al. \cite{BHLN11} generalized Chen's result on trees to show  that target set selection can be solved in $n^{O(w)}$ time where $w$
 is the treewidth of the input graph. The effect of several parameters,  such as diameter and  vertex cover number,
 of the input graph on the complexity of the problem are studied in \cite{NNUW}. The Minimum Target Set has also been studied 
from the point of view of the spread of disease or epidemics. For example, 
in \cite{DR09}, the case when all nodes have a threshold $k$ is studied; the authors showed that the problem is NP-complete for fixed $k \geq 3$.

The problem of 
maximizing the number of nodes activated within a specified number of rounds has also been
studied \cite{Cic14,Cic+,DZNT14,Li+,LP14}. The problem of dynamos or dynamic monopolies in graphs  is essentially the target set problem with every node  threshold being half its degree \cite{Peleg02}. The recent monograph \cite{CLC} contains an excellent overview of the area.

A problem similar to our work,  but considering the independent cascade model, has been considered in \cite{DAngelo17a,DAngelo17b}.

The Influence Maximization problem
with  incentives was introduced in \cite{Dem14}. In this model the authors assume that the thresholds are randomly chosen values in $[0,1]$ and they aim to understand how a fractional version of the Influence Maximization problem differs from the original version.
{To that purpose}, they introduced the concept of partial influence and showed that, in theory, the fractional version retains essentially the same computational hardness as the integral version, but in practice, better  solutions can be computed using heuristics in the fractional setting.

The Targeting with Partial Incentives (TPI) problem, of  finding incentives $p:V\longrightarrow \N_0$ of minimum
cost $\sum_{v\in V} p(v)$  such that all nodes are eventually influenced,
was studied in \cite{CGRV15}.
Exact solutions to the TPI problem for special classes of graphs were proposed in  \cite{CGRV15,CGR16,ER18a,ER18b}. 
Variants of the problem, in which the incentives are  modelled as additional links from an external entity, were studied in \cite{LATA2,LATA1,CGL+18}. The authors of  \cite{Li+} study the case in which 
 offering discounts to nodes causes them to be influenced with a probability proportional to the amount of the discount.

It was shown in \cite{CGRV15}  that  the TPI problem 
 cannot be approximated to within a ratio 
of  $O(2^{\log^{1-\epsilon} n})$,  for any fixed $\epsilon>0$, 
unless $NP\subseteq DTIME(n^{polylog(n)})$, where $n$ is the number of nodes in the graph. 
 As a consequence,  
for general graphs, the same inapproximability result still holds for the time bounded version of the problem that we study in this paper.
\begin{theorem}
The TBI problem 
 cannot be approximated to within a ratio 
of  $O(2^{\log^{1-\epsilon} n})$,  for any fixed $\epsilon>0$, 
unless $NP\subseteq DTIME(n^{polylog(n)})$, where $n$ is the number of nodes in the graph. 
\end{theorem}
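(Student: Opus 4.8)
The plan is to exhibit an approximation-preserving reduction from the TPI problem, whose $O(2^{\log^{1-\epsilon} n})$-inapproximability (unless $NP\subseteq DTIME(n^{polylog(n)})$) is recalled just above. The key observation is that TPI is exactly the special case of TBI in which the round bound $\lambda$ is so large that it imposes no constraint at all, so that ``all nodes eventually influenced'' and ``all nodes influenced within $\lambda$ rounds'' become the same condition.

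First I would establish a bound on the number of rounds needed by any successful process. The sets $\Infl[p,\ell]$ are monotonically nondecreasing in $\ell$; moreover, as long as the process has not yet influenced every node, each round must add at least one new node, for otherwise $\Infl[p,\ell]=\Infl[p,\ell-1]$ and the process is stationary thereafter, so it can never reach $V$. In particular any feasible solution must have a nonempty initial set $\Infl[p,0]$ (otherwise nothing is ever influenced). Hence if an incentive function $p$ eventually influences all $n$ nodes, it does so within $n-1$ rounds, i.e. $\Infl[p,n-1]=V$.

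Given a TPI instance $(G,t)$ with $|V|=n$, I would map it to the TBI instance $(G,t,\lambda)$ with $\lambda=n-1$, leaving the graph and the thresholds untouched. By the round bound, an incentive function $p$ influences all of $V$ eventually if and only if $\Infl[p,n-1]=V$; thus the feasible solutions of the two instances coincide, and since the cost $\sum_{v\in V}p(v)$ is defined identically in both problems, their optimal values are equal. Consequently any $\rho$-approximation algorithm for TBI, applied to $(G,t,n-1)$, returns a feasible incentive function whose cost is within a factor $\rho$ of the TPI optimum, so it is also a $\rho$-approximation for TPI.

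Because the reduction leaves the vertex set unchanged, the parameter $n$ is identical in the two instances, so the inapproximability ratio $O(2^{\log^{1-\epsilon} n})$ transfers verbatim from TPI to TBI under the same complexity assumption, completing the argument. There is no real technical obstacle here; the only point requiring care is the verification that a process reaching all of $V$ terminates within $n-1$ rounds, which is what guarantees that the finite bound $\lambda=n-1$ is genuinely equivalent to ``eventually'' while keeping $n$ — and hence the ratio — unchanged.
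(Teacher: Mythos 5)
Your proposal is correct and follows exactly the route the paper intends: the paper simply asserts that the $O(2^{\log^{1-\epsilon}n})$-inapproximability of TPI from \cite{CGRV15} carries over to TBI, leaving the reduction implicit, and your argument (set $\lambda=n-1$, note that any feasible process adds at least one node per round and hence terminates within $n-1$ rounds, so the two instances have identical feasible solutions and costs) is the correct formalization of that claim. The only point worth noting is that this establishes hardness for TBI with $\lambda$ given as part of the input, which is what the theorem statement means.
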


\noindent
{\bf Our Results.} Our main contributions are polynomial-time algorithms for path, complete,
and tree networks.
In Section~\ref{sec:paths}, we present a linear-time greedy algorithm to allocate 
incentives to the nodes of a path network.
In Section~\ref{sec:complete}, we design a $O(\lambda n\log n)$ dynamic programming algorithm to
allocate incentives to the nodes of a complete network.
In Section~\ref{sec:trees}, we give an $O(\lambda^2 \Delta n) $algorithm to 
allocate incentives to  a tree 
with $n$ nodes and  maximum degree $\Delta$.


\section{A Linear-Time Algorithm for Paths}\label{sec:paths}

\vspace{-0.1truecm}
In this section, we present a greedy algorithm to allocate  incentives
to nodes of a path network.
We prove that our algorithm is linear-time.
%
%
\newcommand{\jk}[2]{L(#1,#2)}

\begin{figure}[tb!]
	\centering
		\includegraphics[width=0.85 \textwidth]{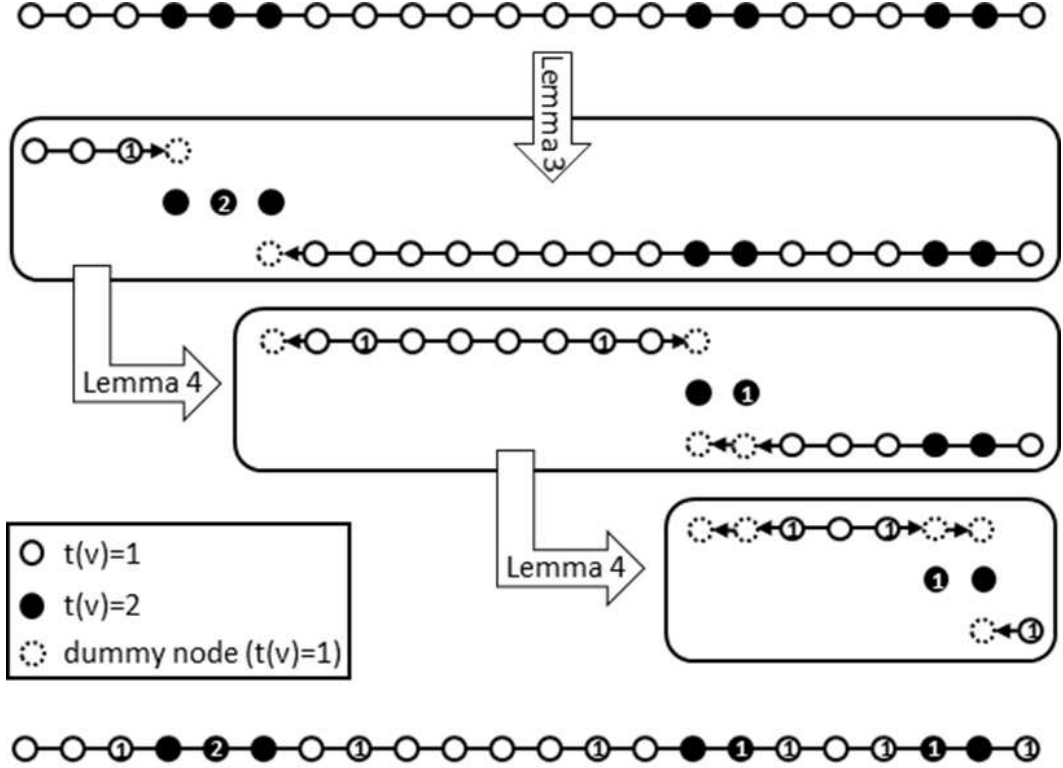}
		\caption{An example of the execution of the Algorithm \ref{alg1} on a path $\jk{0}{21}$ with a 2-path satisfying Lemma \ref{lemmaSub2} and two 2-paths satisfying Lemma \ref{lemmaTwoTwo}. Filled nodes represents nodes having threshold 2. Dashed nodes represents dummy nodes. The number inside the nodes represents the incentive assigned to the node.\label{fig1}} 
\end{figure}

We denote by  $\jk{0}{n-1}$ the  path with $n$ nodes $0, \ldots , n-1$ and edges $\{(i, i+1): 0 \leq i \leq n-2\}$. 
Since the threshold of each node cannot exceed its degree, we have that $t(0) = t(n-1) = 1$
and $t(i) \in \{1, 2\}$, for  $i = 1, \ldots , n {-} 2$.
For $0\leq j\leq k\leq n-1$, we denote by   $\jk{j}{k}$  the subpath induced by the nodes $j,\ldots, k$.

\begin{lemma} \label{lemmaLB}
Let $\jk{j}{k}$ be a subpath of $\jk{0}{n-1}$ with $t(j+1)=\cdots=t(k-1)=2$ and $t(j)=t(k)=1$.  For any incentive function $p:V \to \{0,1,2\}$ that solves the TBI problem on $\jk{0}{n-1}$ and for any $\lambda$,

$$\sum_{i=j+1}^{k-1} p(i) \geq \begin{cases}
k{-}j{-}2  & \text{if both 
                    $j+1$ and $k-1$ are influenced by  $j$ and $k$, resp. }  \\
k{-}j{-}1  &  \text{if either $j+1$ or $k-1$ is influenced by its neighbour ($j$ or $k$)} \\
k-j & \text{otherwise.}
\end{cases}$$ 
\end{lemma}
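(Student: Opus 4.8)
The plan is to fold the three cases into a single inequality and then reduce it to a counting statement about two special kinds of interior nodes. Write $m = k-j-1$ for the number of interior nodes, and introduce indicators $a,b\in\{0,1\}$, where $a=1$ exactly when $j+1$ is influenced by $j$, and $b=1$ exactly when $k-1$ is influenced by $k$. The three right-hand sides are then precisely $k-j-a-b = m+1-a-b$, so it suffices to prove $\sum_{i=j+1}^{k-1} p(i) \ge m+1-a-b$. Since every interior node has threshold $2$ and $p(i)\in\{0,1,2\}$, I would classify interior nodes by incentive: let $S$ be the number with $p(i)=2$ (``sources'') and $Z$ the number with $p(i)=0$ (``barriers''). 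A direct count gives $\sum_{i=j+1}^{k-1} p(i) = m + S - Z$, so the target is equivalent to $S + a + b \ge Z + 1$. I would also treat the contributing endpoints as sources: regard $j$ as a source when $a=1$ and $k$ as a source when $b=1$, so that $T := S + a + b$ is the total number of sources. The whole lemma then reduces to the clean statement $Z \le T - 1$.

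To prove $Z \le T-1$, I would analyse how activation propagates along the segment. A node with $p(i)=1$ needs only one active neighbour, so it relays influence in a single direction; a source is active from the start; a barrier ($p(i)=0$) needs both neighbours active strictly before it, so it can only be activated where two advancing ``waves'' meet. The two structural facts I would establish are: (i) every barrier lies strictly between two consecutive sources, and (ii) each gap between two consecutive sources contains at most one barrier. Fact (i) rules out barriers outside the outermost sources: such a barrier would force a chain of ``activated strictly earlier'' requirements propagating to an endpoint, which cannot be satisfied because no source lies beyond it and the relevant endpoint provides no help (when $a=0$ or $b=0$). Given (i) and (ii), since $T$ sources create exactly $T-1$ gaps, we obtain $Z\le T-1$ at once; note $T\ge 1$ is forced, since otherwise no interior node could ever be activated, contradicting that $p$ solves the TBI instance.

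The hard part is fact (ii), and making the argument robust to the global timing of the process. I would argue by contradiction from two barriers $y_1<y_2$ in the same gap, so that all nodes strictly between them have $p=1$. Activation of $y_1$ requires its right neighbour active strictly earlier; since $y_1$ itself is still inactive, that neighbour must have been activated from its own right, and iterating this ``activated strictly earlier, and from the right'' relation rightward along the $p=1$ run forces $y_2$ to be active strictly before $y_2-1$. But $y_2$ is a barrier and needs $y_2-1$ active strictly before itself, a contradiction in the strict temporal order; the degenerate case $y_2=y_1+1$ yields the same circular requirement directly. The remaining subtlety is bookkeeping the indicators: I must check that ``$j+1$ influenced by $j$'' coincides exactly with ``$j$ acts as a left source'', so that the lemma's case analysis matches the source count $T$ — in particular a barrier at $j+1$ forces $a=1$, and a source at $j+1$ forces $a=0$, both consistent with $Z\le T-1$. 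Since the entire argument uses only the strict activation order and never $\lambda$, the bound holds for every $\lambda$, as claimed.
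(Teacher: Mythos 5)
Your proof is correct, but it takes a genuinely different route from the paper's. The paper's argument is a short double-counting: for each interior node $i$ it writes $inf(i)+p(i)\ge t(i)=2$, where $inf(i)$ is the number of neighbours of $i$ influenced before $i$, sums over $i=j+1,\dots,k-1$ to get $\sum_i p(i)\ge 2(k-j-1)-\sum_i inf(i)$, and then bounds $\sum_i inf(i)$ by observing that each edge transmits influence in at most one direction, so the sum is at most the $k-j-2$ internal edges plus one for each boundary edge $(j,j+1)$, $(k-1,k)$ that actually carries influence inward --- which is exactly what distinguishes the three cases. You instead prove the equivalent inequality $Z\le T-1$ by classifying interior nodes into sources ($p=2$), relays ($p=1$) and barriers ($p=0$) and tracing the direction of the activation chains; your facts (i) and (ii) are sound (the leftward/rightward ``activated strictly earlier'' chains cannot pass through or terminate at a second barrier without creating a cyclic temporal requirement, and must end at an interior source or at an endpoint that supplies influence inward, forcing $a=1$ or $b=1$). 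What your approach buys is a structural picture of tight solutions --- barriers separated by sources, which is precisely the $0(20)^*$ and $01(20)^*$ patterns the paper uses in Lemma 3 to show the bound is attained --- at the cost of a longer propagation argument and some bookkeeping (e.g., taking \emph{adjacent} barriers within a gap in fact (ii)). The paper's edge-counting proof is the more economical for the lower bound alone, and needs no case analysis on the values of $p$.
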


\proof
Let $p$ be an incentive function that solves the TBI problem on $\jk{0}{n-1}$.
For any node $i\in \{j+1,\ldots,k-1\}$, let $inf(i)\in\{0,1,2\}$ be the amount of influence that $i$ receives 
from its neighbours in $\jk{0}{n-1}$ during the influence process starting with 
$p$ {(that is, the number of i's neighbours that are influenced before round $i$)}. . 

For each $i=j+1,\ldots, k-1$, it must hold that $inf(i)+p(i) \geq t(i)=2$.
Hence,
\begin{equation}\label{eq1}
\sum_{i=j+1}^{k-1} p(i) \geq \sum_{i=j+1}^{k-1} (2-inf(i)) \geq     2(k-j-1) - \sum_{i=j+1}^{k-1} inf(i).
\end{equation}
Noticing that each link in $E$ is used to transmit  influence in at most one direction, we have

\begin{equation*}\label{eq2}
\sum_{i=j+1}^{k-1} inf(i) \leq \begin{cases}
k{-}j,  & \text{if both 
                    $j+1$ and $k-1$ are influenced by $j$ and $k$, resp.}  \\
k{-}j{-}1,  & \text{if either $j{+}1$ or $k{-}1$ is influenced by its neighbour ($j$ or $k$)} \\
k{-}j{-}2, & \text{otherwise.}
\end{cases}
\end{equation*}

As a consequence, using  equation  (\ref{eq1}) gives the desired result.

\newcommand{\mytodo}[1]{\textcolor{red}{Check: #1}}

\newcommand{\jko}[2]{OPT(#1,#2)}
\newcommand{\jkdxl}[2]{OPT(#1,#2,\leftarrow)}
\newcommand{\jksxr}[2]{OPT(\rightarrow, #1,#2)}

\newcommand{\jkdxr}[2]{OPT(#1,#2,{\substack{1\\ \rightarrow}})}
 \newcommand{\jksxl}[2]{OPT({\substack{1\\ \leftarrow},#1,#2)}}
\newcommand{\jksxll}[3]{OPT({\substack{#3\\ \leftarrow},#1,#2)}}
\newcommand{\jkdxrr}[3]{OPT(#1,#2,{\substack{#3\\ \rightarrow}})}

In the following we assume that $\lambda\geq 2$. The  case $\lambda=1$ will follow from the results  in Section \ref{sec:trees}, since the algorithm for trees has linear time when both $\lambda$ and the maximum degree are constant. 
\begin{definition} \label{def1}
We denote by $\jko{0}{n-1}$ the value of an optimal solution $p:V \to \{0,1,2\}$ to the TBI problem on  $\jk{0}{n-1}$ in $\lambda$ rounds.
For any subpath $\jk{j}{k}$ of $\jk{0}{n-1}$, we denote by: 
\begin{itemize}

\item[i)] $\jko{j}{k}$ the value $\sum_{i=j}^k p(i)$ where $p$ is  an  optimal solution to the TBI problem on $\jk{j}{k}$;

	\item[ii)] $\jkdxl{j}{k}$ the value $\sum_{i=j}^k p(i)$ where $p$ is  an  optimal solution to the TBI problem on $\jk{j}{k}$ with the additional condition that  the node $k$ gets one unit of influence   from   $k+1$;

	\item[iii)] $\jkdxrr{j}{k}{\ell}$ the value $\sum_{i=j}^k p(i)$ where $p$ is an  optimal solution to the TBI  problem on $\jk{j}{k}$ with the additional condition that    $k$ is influenced by round $\lambda-\ell$ without 
	getting influence from node  $k+1$;

	\item[iv)] $\jksxr{j}{k}$ the value $\sum_{i=j}^k p(i)$ where $p$ is an   optimal solution to the TBI problem  on $\jk{j}{k}$ with the additional condition that  $j$ gets one unit of influence   from   $j-1$;
		
	\item[v)] $\jksxll{j}{k}{\ell}$ the value $\sum_{i=j}^k p(i)$ where $p$ is an  optimal solution to the TBI  problem on $\jk{j}{k}$ with the additional condition that   node $j$ is influenced by round $\lambda-\ell$ without getting influence from  $j-1$. 
\end{itemize}
\end{definition}

\begin{lemma}\label{lemmaRes}
For any subpath $L(j,k)$ and for each $1\leq \ell<\ell'\leq \lambda$:

\smallskip

(1) If $t(k)=1$ then
 {\small $\jkdxl{j}{k} \leq \jko{j}{k}  \leq \jkdxrr{j}{k}{\ell} \leq  \jkdxrr{j}{k}{\ell'} \leq \jkdxl{j}{k}{+}1$.}
 
\smallskip

(2) If $t(j)=1$ then
 {\small
$\jksxr{j}{k} \leq \jko{j}{k}   \leq \jksxll{j}{k}{\ell} \leq \jksxll{j}{k}{\ell'} \leq \jksxr{j}{k}{+}1$.}
\end{lemma}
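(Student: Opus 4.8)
The plan is to read all five quantities as minimum incentive costs for the \emph{same} subpath $\jk{j}{k}$ under different boundary conditions imposed at its right endpoint $k$; in part~(1) the left endpoint $j$ carries the identical (plain) condition in every case, so the four optima in the chain differ only in how node $k$ is treated. The whole chain then reduces to a single principle that I would state once and reuse: enlarging the set of feasible incentive functions can only lower the optimum, and shrinking it can only raise it. Granting $k$ one external unit from $k+1$ is a relaxation (it enlarges the feasible set), whereas forcing $k$ to be influenced by an earlier round with no external help is a restriction (it shrinks the feasible set).

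The three middle inequalities would follow directly from this principle by feasible-set containment. For $\jkdxl{j}{k}\le\jko{j}{k}$, any function feasible for $\jko{j}{k}$ (all of $[j,k]$ influenced within $\lambda$, no external help) stays feasible once $k$ is additionally handed one free unit from $k+1$, since extra influence cannot postpone an activation; hence the $\leftarrow$-optimum is no larger. For $\jko{j}{k}\le\jkdxrr{j}{k}{\ell}$, a $\rightarrow$-feasible function influences $k$ by round $\lambda-\ell\le\lambda$ using no external help and is therefore in particular plain-feasible. For $\jkdxrr{j}{k}{\ell}\le\jkdxrr{j}{k}{\ell'}$ with $\ell<\ell'$, the deadline $\lambda-\ell'$ is strictly earlier than $\lambda-\ell$, so every function feasible for the $\ell'$-condition is feasible for the $\ell$-condition, and the optimum can only grow as the deadline tightens.

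The one substantive step is the closing inequality $\jkdxrr{j}{k}{\ell'}\le\jkdxl{j}{k}+1$, and this is where the hypothesis $t(k)=1$ is essential. I would start from an optimal $\leftarrow$-solution $p^*$ of cost $\jkdxl{j}{k}$, noting that $0\le p^*(k)\le t(k)=1$, and define $q$ by keeping $q(i)=p^*(i)$ on $[j,k-1]$ and raising the endpoint to $q(k)=t(k)=1$. Then $k\in\Infl[q,0]$, so it is influenced at round $0$: it needs no help from $k+1$ and meets every deadline $\lambda-\ell'\ge 0$. Advancing $k$'s activation to round $0$ cannot delay any other node (monotonicity of the threshold process in the activation times), so each node of $[j,k-1]$ is influenced under $q$ no later than under $p^*$; since $p^*$ influences all of $[j,k]$ within $\lambda$ rounds, so does $q$. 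Thus $q$ is $\rightarrow$-feasible, and its cost is $\sum_{i=j}^{k-1}p^*(i)+1=\jkdxl{j}{k}-p^*(k)+1\le\jkdxl{j}{k}+1$.

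I expect the monotonicity justification in this last step to be the main obstacle: one must argue cleanly that replacing the right-to-left help on $k$ by a round-$0$ self-start only accelerates the leftward spread, and that $t(k)=1$ is exactly what both caps the added incentive at $1$ and lets a single external unit fully cover $k$'s threshold in the $\leftarrow$ case (for $t(k)=2$ the construction would cost up to $2$ more and the bound would fail). Part~(2) is the mirror image of part~(1) under reversing the orientation of the path — swapping the roles of the endpoints $j$ and $k$ and of the two arrows — so the identical argument, now using $t(j)=1$, yields the second chain.
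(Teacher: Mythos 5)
Your proposal is correct and follows essentially the same route as the paper: the three middle inequalities by containment of the feasible sets, and the closing inequality $\jkdxrr{j}{k}{\ell'} \leq \jkdxl{j}{k}+1$ by taking an optimal $\leftarrow$-solution and raising the incentive at $k$ to $t(k)=1$ so that $k$ is influenced at round $0 \leq \lambda-\ell'$. Your explicit appeal to monotonicity (advancing $k$'s activation cannot delay any other node) only makes precise a step the paper leaves implicit.
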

\begin{proof}
We first prove (1).
We notice that each of the first three inequalities\\
\noindent $\jkdxl{j}{k} {\leq} \jko{j}{k}, \   \jko{j}{k}{\leq} \jkdxrr{j}{k}{\ell},\ \jkdxrr{j}{k}{\ell} {\leq}  \jkdxrr{j}{k}{\ell'}$

\smallskip\noindent
 is trivially true since each solution that satisfies the assumptions of the right term is also a solution that satisfies the assumptions of the left term.
 It remains to show that $\jkdxrr{j}{k}{\ell'} \leq \jkdxl{j}{k} +1.$
	Let $p$ be a solution that gives  $\jkdxl{j}{k}$.
	Consider  $p'$ such that 
	$p'(i)=p(i)$, for each $i=j,\ldots,k-1$ and $p'(k) =1.$ 
Recalling that $t(k)=1$, we get that the cost   increases by at most 1 and 
 $p'$ is  a solution 
	in which   node $k$ is influenced at round $0\leq \lambda-\ell'$. A similar  proof holds for (2).
	\qed
\end{proof}

\begin{definition}\label{2path}
 $\jk{j}{k}$, with $j+1\leq k-1$,   is called a 2-path if  $t(j+1)=\ldots=t(k-1)=2$ and $t(j)=t(k)=1$.
\end{definition}

\begin{lemma}\label{lemmaSub2}
For any value of $\lambda$, if  $\jk{j}{k}$ is a 2-path with $m=k-j-1\neq 2$ then 
$$\jko{0}{n-1}= \jkdxr{0}{j}+k-j-2 +  \jksxl{k}{n-1}.$$
\end{lemma}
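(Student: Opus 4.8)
The plan is to prove the identity by establishing the two inequalities
$\jko{0}{n-1} \le \jkdxr{0}{j} + (k{-}j{-}2) + \jksxl{k}{n-1}$ and
$\jko{0}{n-1} \ge \jkdxr{0}{j} + (k{-}j{-}2) + \jksxl{k}{n-1}$ separately. Throughout I write $m=k-j-1$ for the number of threshold-$2$ interior nodes, and I split the cost of any incentive function into the three contributions on $L(0,j)$, on the interior $\{j{+}1,\dots,k{-}1\}$, and on $L(k,n{-}1)$.

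For the lower bound ($\ge$) I would start from an optimal solution $p$ for the whole path and apply Lemma~\ref{lemmaLB} to the $2$-path. Its three cases bound the interior cost below by $m-1$, $m$, or $m+1$ according to whether both, exactly one, or neither of $j{+}1,k{-}1$ is fed by its outer neighbour $j$ or $k$. Whenever the edge $(j,j{+}1)$ carries influence from $j$ into the $2$-path, node $j$ must be influenced by round $\lambda-1$ without using $j{+}1$, so the restriction of $p$ to $L(0,j)$ is feasible for the problem defining $\jkdxr{0}{j}$ and its cost is $\ge \jkdxr{0}{j}$; symmetrically on the right. If instead $j$ is influenced through $j{+}1$, I would make the left part self-sufficient by raising $p(j)$ to $t(j)=1$ (cost $+1$, and then $j$ starts at round $0\le\lambda-1$ without $j{+}1$), so the left cost is $\ge \jkdxr{0}{j}-1$, and likewise on the right. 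The surplus of $1$ (respectively $2$) in the interior bound in the one-sided (respectively zero-sided) case exactly pays for these $-1$ corrections, so in every case the three pieces sum to at least the right-hand side. Note this direction uses no timing and holds for all $m$.

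For the upper bound ($\le$) I would glue an optimal solution for $\jkdxr{0}{j}$, an optimal solution for $\jksxl{k}{n-1}$, and an interior assignment of cost exactly $m-1$, and verify that the combined function influences the whole path within $\lambda$ rounds; monotonicity ensures the two side solutions still finish on time in the full path, since the extra interior neighbours of $j$ and $k$ can only help. The delicate point is the interior: because $j$ and $k$ are guaranteed only by round $\lambda-1$, end-propagation advances just one step into the $2$-path in the remaining round, so for $m\ge3$ the middle is unreachable from the ends alone. The device I would use is to pre-influence (incentive $2$) a spread-out set of interior nodes acting as round-$0$ sources; each source splits the interior into segments that are fed from both sides and hence contain one free (incentive-$0$) node, so placing $q$ sources with all $q{+}1$ segments non-empty gives total interior cost $2q + \big((m-q)-(q+1)\big) = m-1$, independently of $q$. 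Choosing $q$ large enough that every segment has length at most $2\lambda-1$ (possible since $\lambda\ge2$, and using Lemma~\ref{lemmaRes} to absorb the off-by-one at the two end segments adjacent to the late nodes $j,k$) then makes every interior node influenced by round $\lambda$.

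The main obstacle, and the reason for the hypothesis $m\ne2$, is exactly this cost-preserving source construction. An inserted source must border two non-empty segments, i.e.\ sit at position $j{+}c$ with $2\le c\le m-1$, which is possible only when $m\ge3$; for $m=1$ the single interior node is reached directly from $j$ and $k$ at cost $0=m-1$. When $m=2$ the budget $m-1=1$ admits no interior source, and the only end-fed assignment forces the meeting node to round $\lambda+1$, so the true interior cost is $m$ rather than $m-1$; this is why $m=2$ is excluded here and handled separately in Lemma~\ref{lemmaTwoTwo}. I would therefore isolate the interior timing verification (segment lengths against $\lambda$, and the treatment of the two end segments) as the single step requiring a careful but elementary argument.
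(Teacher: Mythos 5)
Your proposal is correct and follows essentially the same route as the paper: the lower bound via the case analysis of Lemma~\ref{lemmaLB} combined with the $\pm 1$ relations of Lemma~\ref{lemmaRes}, and the upper bound via an explicit interior assignment of cost $m-1$. Your source-and-segment construction is a mild generalization of the paper's $0(20)^*$ and $01(20)^*$ incentive patterns (which correspond to taking all segments of length one, or one leading segment of length two), and your diagnosis of why $m=2$ must be excluded matches the paper's separate treatment in Lemma~\ref{lemmaTwoTwo}.
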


\begin{proof}
The links $(j,j+1)$ and $(k-1,k)$ can be used to transmit influence in at most one direction. This and    Lemma \ref{lemmaLB} imply that there exists two integers $1\leq \ell, \ell'\leq\lambda$ such that  $\jko{0}{n-1}$ is at least: 
\begin{enumerate}
\item
$\jkdxrr{0}{j}{\ell}+ k-j-2 +  \jksxll{k}{n-1}{\ell'}$,\\
\hphantom{aaaaaaaaaaaaaaaaaaaaaaaa}{if  both $j+1$ and $k-1$ obtain  influence from $j$ and $k$, respectively;}
\item
$\jkdxl{0}{j}+ k-j-1 +  \jksxll{k}{n-1}{\ell'}$,\\
\hphantom{aaaaaaaaaaaaaaaaaaaaaaaa}{ if   $k-1$  obtains  influence from $k$ but $j+1$ does not from $j$;}
\item
$\jkdxrr{0}{j}{\ell}+ k-j-1 +  \jksxr{k}{n-1}$,  \\
\hphantom{aaaaaaaaaaaaaaaaaaaaaaaa}{if   $j+1$  obtains  influence from $j$ but $k-1$ does not from $k$;}
\item
$\jkdxl{0}{j}+ k-j +\jksxr{k}{n-1}$\\
\hphantom{aaaaaaaaaaaaaaaaaaaaaaaa}{ if   neither $k-1$  obtains  influence from $k$ nor $j+1$ does  from $j$.}
\end{enumerate}
By (1) and (2) of Lemma \ref{lemmaRes}, we know that

\smallskip
\centerline{$\jkdxrr{0}{j}{\ell} \leq \jkdxl{0}{j} +1 \quad \mbox{ and }\quad  \jksxll{k}{n-1}{\ell'} \leq \jksxr{k}{n-1}+1$.}

\smallskip
\noindent
As a consequence, we get that in each of the above cases 1-4

\smallskip
\centerline{$\jko{0}{n-1} \geq \jkdxrr{0}{j}{\ell}+  k-j-2+\jksxll{k}{n{-}1}{\ell'}.$}

\smallskip
\noindent
From which, applying again  Lemma \ref{lemmaRes}, we get  
\begin{equation}\label{eqcan}
 \jko{0}{n-1}  \geq\jkdxr{0}{j}+  k-j-2+\jksxl{k}{n{-}1}.\end{equation}
We show now that the bound in (\ref{eqcan}) can be reached. Let $m=k{-}j{-}1$ be the number of nodes in $\jk{j{+}1}{k{-}1}.$
\\
If $m$ is odd, we can use the sequence of incentives $0(20)^*$ in order to influence  the nodes in $\jk{j{+}1}{k{-}1}$, provided that both $j+1$ and $k-1$ are influenced by $j$ and $k$ respectively within round $\lambda -1$ (see Fig. \ref{fig1}).\\
If $m$ is even, recalling that $\lambda >1$,   we can use the sequence of incentives $01(20)^*$. \\
By noticing that the cost of any  of the two sequences   is $k{-}j{-}2$, we get the desired result.
\qed

\end{proof}

\remove{\proof 
The links $(j,j+1)$ and $(k-1,k)$ can be used to transmit influence in at most one direction. This and    Lemma \ref{lemmaLB} imply that there exists two integers $1\leq \ell, \ell'\leq\lambda$ such that  $\jko{0}{n-1}$ is at least: 
\begin{enumerate}
\item
$\jkdxrr{0}{j}{\ell}+ k-j-2 +  \jksxll{k}{n-1}{\ell'}$,\\
\hphantom{aaaaaaaaaa}{if  both $j+1$ and $k-1$ obtain  influence from $j$ and $k$, respectively;}
\item
$\jkdxl{0}{j}+ k-j-1 +  \jksxll{k}{n-1}{\ell'}$,\\
\hphantom{aaaaaaaaaa}{ if   $k-1$  obtains  influence from $k$ but $j+1$ does not from $j$;}
\item
$\jkdxrr{0}{j}{\ell}+ k-j-1 +  \jksxr{k}{n-1}$,  \\
\hphantom{aaaaaaaaaaa}{if   $j+1$  obtains  influence from $j$ but $k-1$ does not from $k$;}
\item
$\jkdxl{0}{j}+ k-j +\jksxr{k}{n-1}$\\
\hphantom{aaaaaaaaaa}{ if   neither $k-1$  obtains  influence from $k$ nor $j+1$ does  from $j$.}
\end{enumerate}
By (1) and (2) of Lemma \ref{lemmaRes}, we know that

\smallskip
\centerline{$\jkdxrr{0}{j}{\ell} \leq \jkdxl{0}{j} +1 \quad \mbox{ and }\quad  \jksxll{k}{n-1}{\ell'} \leq \jksxr{k}{n-1}+1$.}

\smallskip
\noindent
As a consequence, we get that in each of the above cases 1-4

\smallskip
\centerline{$\jko{0}{n-1} \geq \jkdxrr{0}{j}{\ell}+  k-j-2+\jksxll{k}{n{-}1}{\ell'}.$}

\smallskip
\noindent
From which, applying again  Lemma \ref{lemmaRes}, we get  
\begin{equation}\label{eqcan}
 \jko{0}{n-1}  \geq\jkdxr{0}{j}+  k-j-2+\jksxl{k}{n{-}1}.\end{equation}
We show now that the bound in (\ref{eqcan}) can be reached. Let $m=k{-}j{-}1$ be the number of nodes in $\jk{j{+}1}{k{-}1}.$
\\
If $m$ is odd, we can use the sequence of incentives $0(20)^*$ in order to influence  the nodes in $\jk{j{+}1}{k{-}1}$, provided that both $j+1$ and $k-1$ are influenced by $j$ and $k$ respectively within round $\lambda -1$ (see Fig. \ref{fig1}).\\
If $m$ is even, recalling that $\lambda >1$,   we can use the sequence of incentives $01(20)^*$. \\
By noticing that the cost of any  of the two sequences   is $k{-}j{-}2$, we get the desired result.
\qed
}

%
%
\begin{lemma} \label{lemmaTwoTwo}
For any time bound $\lambda>1$, if  $t(0)=t(1)=\ldots =t(j-1)=1$ and $\jk{j}{j+3}$ is a 2-path then
$\jko{0}{n{-}1}$ is equal to

\smallskip

\centerline{$1{+}\min\Big\{\jkdxr{0}{j}{+}\jksxll{j{+}3}{n{-}1}{2},
\jkdxrr{0}{j}{2}{+}\jksxl{j{+}3}{n{-}1}\Big\}.$}
\end{lemma}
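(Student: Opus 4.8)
The plan is to prove the stated equality by matching upper and lower bounds. As in Lemma~\ref{lemmaSub2}, the two interior nodes $j{+}1,j{+}2$ of the $2$-path must together cost at least $k-j-2=1$, but when $m=2$ there is no interior node that can be pre-activated (by an incentive of $2$) and then feed \emph{both} boundary nodes in time. Consequently exactly one unit of incentive will be spent inside the block and one of the two sides must reach its boundary node one round earlier; the two terms of the minimum correspond to the two choices of which side pays for this extra round.

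For the upper bound I would exhibit two explicit feasible functions, one per term. For the first term, take a solution realizing $\jkdxr{0}{j}$ (so $j$ is influenced by round $\lambda-1$ using no influence from $j{+}1$) together with one realizing $\jksxll{j+3}{n-1}{2}$ (so $j{+}3$ is influenced by round $\lambda-2$ using no influence from $j{+}2$), and set $p(j{+}2)=1$, $p(j{+}1)=0$. Then $j{+}2$ reaches its threshold by round $\lambda-1$ (its unit of incentive plus the unit arriving from $j{+}3$), and at round $\lambda$ node $j{+}1$ sees both $j$ and $j{+}2$ already influenced, so it is activated in time; the cost is $1+\jkdxr{0}{j}+\jksxll{j+3}{n-1}{2}$. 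The second term is symmetric, with $p(j{+}1)=1$, $j$ influenced by round $\lambda-2$, and $j{+}3$ by round $\lambda-1$. Since the left block, the right block and the $2$-path meet only at the forbidden cross-edges, and extra late influence never hurts, both combinations are legal, giving $\jko{0}{n-1}\le 1+\min\{\cdots\}$.

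For the lower bound I would fix an optimal $p$ and reason about the directions in which the three edges $(j,j{+}1)$, $(j{+}1,j{+}2)$, $(j{+}2,j{+}3)$ carry influence, recalling from Lemma~\ref{lemmaLB} that each edge transmits in at most one direction. Writing $inf(\cdot)$ as in that lemma, $inf(j{+}1)+inf(j{+}2)\le 3$, whence $p(j{+}1)+p(j{+}2)\ge 1$, and equality forces all three edges to be used. In the equality case, if the internal edge is oriented $j{+}1\!\to\! j{+}2$ then necessarily $p(j{+}1)=1$, node $j{+}1$ is activated only after $j$, and $j{+}2$ only after both $j{+}1$ and $j{+}3$; counting rounds then forces $j$ to be influenced by round $\lambda-2$ and $j{+}3$ by round $\lambda-1$, so the left and right costs are at least $\jkdxrr{0}{j}{2}$ and $\jksxl{j+3}{n-1}$, giving $\jko{0}{n-1}\ge 1+(\text{second term})$; the orientation $j{+}2\!\to\! j{+}1$ is symmetric and yields the first term. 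When instead $p(j{+}1)+p(j{+}2)\ge 2$, I would bound each side by whatever timing constraint its edge pattern imposes and absorb the surplus interior unit using the slack inequalities $\jkdxrr{0}{j}{2}\le\jkdxr{0}{j}+1$, $\jksxll{j+3}{n-1}{2}\le\jksxl{j+3}{n-1}+1$ and $\jksxl{j+3}{n-1}\le\jksxr{j+3}{n-1}+1$ supplied by Lemma~\ref{lemmaRes}; a short check for each orientation shows the cost is still at least $1+\min\{\cdots\}$. Combining the two bounds gives the equality.

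The step I expect to be the crux is the round-counting in the equality case $p(j{+}1)+p(j{+}2)=1$: one must prove that a single interior unit cannot finish \emph{both} boundary nodes if each boundary node only becomes available at round $\lambda-1$, which is precisely what separates $m=2$ from the generic case of Lemma~\ref{lemmaSub2}. The hypothesis $t(0)=\cdots=t(j-1)=1$ places this block at the left end of the path (the first nontrivial segment), so the left side is a pure threshold-$1$ chain and the constrained optima $\jkdxr{0}{j}$ and $\jkdxrr{0}{j}{2}$ entering the two terms are directly comparable through Lemma~\ref{lemmaRes}.
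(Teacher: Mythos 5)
Your proposal is correct and follows essentially the same route as the paper: a case analysis on the incentives placed on the two interior nodes $j{+}1,j{+}2$ (you organize it by their sum and the orientation of the middle edge, the paper by their exact values), with the slack inequalities of Lemma~\ref{lemmaRes} used to absorb any surplus interior incentive and the same round-counting argument forcing one boundary node to be ready by round $\lambda-2$. The only difference is that you make the matching upper-bound constructions explicit, which the paper leaves implicit in its Cases 1 and 2.
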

\begin{proof}{}
Let $p$ any solution for $L(0,n-1)$.
We consider all the possible ways in which the nodes $j+1$ and $j+2$ can get incentives
\begin{enumerate}
\item $p(j+1)=0, p(j+2)=1$. In order to influence  $j+1$ by round $\lambda$ the nodes $j$ and  $j+2$ must be influenced by round $\lambda-1$  which in turn needs 
	$j+3$  be influenced by round $\lambda-2$.
	Consequently, 
		\begin{equation}\sum_{i=0}^{n-1}p(i)\geq  1+ \jkdxr{0}{j}+\jksxll{j+3}{n{-}1}{2}.\label{eqc2}\end{equation}
\item $p(j+1)=1, p(j+2)=0$. As in  case 1, we have 	
	\begin{equation}
	\sum_{i=0}^{n-1}p(i)\geq 1+ \jkdxrr{0}{j}{2}+\jksxl{j+3}{n{-}1}. \label{eqc3}
	\end{equation}
\item $p(j+1)=p(j+2)=0$.  This case cannot hold since  none of the two nodes could  be influenced (they have threshold $2$ and can get at most one unit of influence from one neighbour).  
	\item $p(j+1)=1, p(j+2)=1$.  In this case,  either  $j+1$ receives one unit of influence from $j$ or  $j+2$ receives one unit of influence from $j+3$ by some time $\leq \lambda -1$.  Assume first that  there is an optimal solution in which $j+1$ receives one unit of influence from $j$ at round  $\lambda -\ell$. This means that
	$$\sum_{i=0}^{n-1}p(i)\geq \jkdxrr{0}{j}{\ell} + 2 + 
	\begin{cases}{\jksxr{j+3}{n-1}} &{\mbox{if $\ell>2$}}\\
	              {OPT(j+3,n-1) } &{\mbox{if $\ell\leq2$}}\end{cases}$$
By Lemma \ref{lemmaRes},   
	it holds $OPT(j+3,n-1) \geq \jksxr{j+3}{n-1}$ and $\jkdxr{0}{j} \leq \jkdxrr{0}{j}{\ell}$, for any $1< \ell \leq \lambda$; hence, 
	$$\sum_{i=0}^{n-1}p(i)\geq \jkdxrr{0}{j}{1} + 2 + \jksxr{j+3}{n-1} $$
Moreover, by Lemma \ref{lemmaRes}, we know that $	\jksxr{j+3}{n-1}\geq \quad \jksxll{j+3}{n-1}{2} -1$
	and we get again the bound  (\ref{eqc2}).
	
	Similarly, if we assume  that  there is an optimal solution in which $j+2$ receives one unit of influence from $j+3$ by some time $\leq \lambda -1$, we can obtain that 
	the bound   (\ref{eqc3}) holds.
	\item $p(j+1)=2, p(j+2)=0$.   
	In this case, recalling that for any $1< \ell \leq \lambda,$ $\jksxl{j{+}3}{n-1} \leq \jksxll{j{+}3}{n{-}1}{\ell}$ we have  
	$$	\sum_{i=0}^{n{-}1}p(i)\geq  2+  	\jkdxl{0}{j}+\jksxl{j{+}3}{n{-}1} 
	\geq 	1+  \jkdxrr{0}{j}{2} +\jksxl{j{+}3}{n{-}1} 
$$
	where the last inequality is due to   (1) of Lemma \ref{lemmaRes}, e.g. $\jkdxrr{0}{j}{2}\leq  \jkdxl{0}{j}+1$.  Hence, the bound (\ref{eqc3}) holds in this case.
	\item $p(j+1)=0, p(j+2)=2$.  Similarly as in case 5, one gets the bound (\ref{eqc2}).
		\item $p(j+1)+p(j+3)\geq 3$. In this case, we have  
	$$	\sum_{i=0}^{n-1}p(i)\geq  3+  	\jkdxl{0}{j}+\jksxr{j+3}{n{-}1}.$$
	By (1) and (2) of Lemma \ref{lemmaRes}, we get  that no such a solution can   beat the bound in  (\ref{eqc2}).
\end{enumerate}
Overall,    we can always find an optimal solution satisfying either Case $1$ or $2$  and the Lemma follows.
\qed
\end{proof}

%

\begin{lemma}\label{opt1}
For any value of $\lambda$,  the minimum cost for the TBI problem on a path   of $n$ nodes having threshold $1$ is $\left \lceil n/(2\lambda+1)\right \rceil$.
\end{lemma}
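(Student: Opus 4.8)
The plan is to reduce this instance to a simple interval-covering problem. First I would observe that, since every threshold equals $1$ and the feasibility constraint forces $0\le p(v)\le t(v)=1$, we have $p(v)\in\{0,1\}$ for every node. Thus the cost $\sum_v p(v)$ is exactly the number of \emph{seeds}, i.e.\ the nodes with $p(v)=1=t(v)$, which are precisely the members of $\Infl[p,0]$. Hence solving TBI on this instance is equivalent to choosing a smallest set of seeds $S\subseteq\{0,\dots,n-1\}$ whose influence reaches every node within $\lambda$ rounds.

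Next I would pin down how influence spreads when all thresholds are $1$: an uninfluenced node becomes influenced as soon as a single neighbour is influenced, so influence travels outward from each seed at a rate of one node per round in each direction along the path. Consequently a node $v$ lies in $\Infl[p,\ell]$ exactly when its distance to the nearest seed is at most $\ell$; in particular $\Infl[p,\lambda]=V$ iff every node is within distance $\lambda$ of some seed. I would establish the ``round of first influence $=$ distance to nearest seed'' statement by a short induction on $\ell$, where the earliest-arriving wave wins when several seeds compete.

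With this characterization in hand, the lower bound is immediate: each seed $s$ can reach only the nodes in $\{s-\lambda,\dots,s+\lambda\}\cap\{0,\dots,n-1\}$, an interval of at most $2\lambda+1$ nodes, so covering all $n$ nodes requires at least $\lceil n/(2\lambda+1)\rceil$ seeds. For the matching upper bound I would partition the path into $\lceil n/(2\lambda+1)\rceil$ consecutive blocks, each of at most $2\lambda+1$ nodes, and place one seed in each block at a position within distance $\lambda$ of both of its endpoints, which is possible precisely because each block has at most $2\lambda+1$ nodes. This seeding influences the whole path within $\lambda$ rounds at cost $\lceil n/(2\lambda+1)\rceil$, matching the lower bound.

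The argument is essentially routine; the only points needing care are the precise induction showing that the first round at which a node is influenced equals its distance to the closest seed, and the boundary bookkeeping in the block construction so that the number of seeds comes out to exactly $\lceil n/(2\lambda+1)\rceil$ rather than one more (in particular handling the last, possibly shorter, block and the endpoints of the path).
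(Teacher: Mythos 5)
Your proposal is correct and follows essentially the same route as the paper: the lower bound via the observation that one unit of incentive can influence at most $2\lambda+1$ nodes within $\lambda$ rounds, and the matching upper bound via placing one seed near the centre of each block of $2\lambda+1$ consecutive nodes (the paper's explicit formula $i=\lambda+c(2\lambda+1)$, with one extra seed at $n-1-\lambda$ for the remainder, is exactly your block construction). The only difference is presentational: you make explicit the ``round of first influence equals distance to nearest seed'' characterization that the paper leaves implicit.
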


\begin{proof}{}
For any value of $\lambda$, we know that each incentive given to a node can be used to influence at most $2\lambda+1$ nodes within $\lambda$ rounds.

Consider a path of $n$ nodes having threshold $1$, a simple strategy that requires an optimal number (i.e., $\left \lceil n/(2\lambda+1)\right \rceil$) of incentives is the following.
If $n\leq 2\lambda+1,$ a single incentive to any middle node of the path is enough to influence the whole path in at most $\lambda$ rounds (i.e, $p(\lfloor n/2 \rfloor)=1$ and $p(i)=0$ for each $i\neq \lfloor n/2 \rfloor$). Otherwise, 
\begin{equation} \label{OptimalAssigment}p(i)=\begin{cases}1 & \text{ if } i=\lambda +c(2\lambda+1) \text{ for some } c\in\left \{0,\ldots,\left \lfloor \frac{n}{2\lambda+1}\right \rfloor-1\right \} \\ 
1 & \text{ if } i=n-1-\lambda \text  { and  } \left \lfloor \frac{n}{2\lambda+1}\right \rfloor \neq \left \lceil \frac{n}{2\lambda+1}\right \rceil \\0& \text{otherwise.} \end{cases}
\end{equation} \qed
\end{proof}


\begin{remark} \label{obsDummy}

 $\jkdxrr{j}{k}{\ell}$  can be obtained by solving the TBI  problem on an augmented path $\jk{j}{k+\ell}$ obtained from $\jk{j}{k}$ by concatenating   $\ell$  dummy nodes  on the right of $k$ with $t(k+1)=t(k+2)=\ldots=t(k+\ell)=1.$ 
Notice  that, for $\ell\leq\lambda$, it is always possible to find an optimal assignment of incentives for the augmented path $\jk{j}{k+\ell}$ in which all dummy nodes get incentive $0$.
Indeed it is possible to obtain such an assignment starting from any optimal assignment and moving the incentives from the dummy nodes to node $k$. 	
An analogous observation  holds for  $\jksxll{j}{k}{\ell}$.
\end{remark}

\begin{algorithm}[ht!]
\SetCommentSty{footnotesize}
\SetKwInput{KwData}{Input}
\SetKwInput{KwResult}{Output}
\DontPrintSemicolon
\caption{ \ \    TBI-Path($\jk{0}{n{-}1}$) \label{alg1}}
\KwData {A Path $\jk{0}{n{-}1}$, thresholds $t(i) \in \{1,2\}$, $i=0,\ldots,n-1$, and a time bound $\lambda$.}
\KwResult{A solution $p(i):V \rightarrow \{0,1,2\}$ of the TBI problem. }
\setcounter{AlgoLine}{0}
$i=0$\\
\While {there exists a node $j$ with $t(j)=2 \ $  for some $i<j<n-1$}{
Identify  the leftmost 2-path in the current path $L(i,n-1)$; let it be $L(j,k)$.\\
\If{$L(j,k)$ is a  2-path satisfying  Lemma \ref{lemmaSub2} }{ assign incentives to the nodes $j+1,\ldots,k-1$
as in  Lemma \ref{lemmaSub2};   \\
$t(j+1)=t(k-1)=1$;\\
obtain $p(i),\ldots, p(j)$ by using Lemma \ref{opt1} on $\jk{i}{j{+}1}$ with {\small $t(i)=\cdots=t(j{+}1)=1$}; \\
$i=k-1$;}
\ElseIf{$L(j,k=j+3)$ is a  2-path satisfying  Lemma \ref{lemmaTwoTwo}}{
\If(\tcp*[f]{Case 1 of Lemma \ref{lemmaTwoTwo} }){$j-i+2=c(2\lambda+1)$ for some $c > 0$}{
$p(j+1)=0;$ $\quad p(j+2)=1;$ $\quad i'=j+1;$
} 
\Else(\tcp*[f]{Case 2 of Lemma \ref{lemmaTwoTwo} }){
$p(j+1)=1;$ $\quad  p(j+2)=0;$ $\quad i'=j+2;$
}
$t(j+1)=t(j+2)=1$;\\
obtain $p(i),\ldots, p(j)$ by using Lemma \ref{opt1} on $\jk{i}{i'}$ with {\small $t(i)=\cdots=t(i')=1$}; \\
$i=i';$\\

}
}
Assign incentives to  $\jk{i}{n-1}$ (with $t(i)=\ldots=t(n-1)=1$), using Lemma \ref{opt1};\\
\Return $p$;
\end{algorithm}
\remove{The above results tell  that  whenever there are nodes of threshold 2, 
there is an optimal solution in which  a maximal length subpath of $m$ nodes having threshold $2$, receives the incentives as suggested by  Lemma \ref{lemmaSub2} (i.e., $0(20)^*$ when $m$ is odd and  $01(20)^*$ when $m>2$ is even) or Lemma  \ref{lemmaTwoTwo} (i.e., $01$ or $10$ when $m=2$).} 

\smallskip

{Our algorithm  iterates  from left to right, identifying all of the 2-paths and, using  Lemma~\ref{lemmaSub2} or~\ref{lemmaTwoTwo} and Lemma~\ref{opt1}, it  optimally assigns incentives both to the nodes of threshold 2 and to the nodes (of threshold $1$) on the left. It then  removes them from the original path. 
Eventually, it will  deal with a last subpath in which  all of the nodes have threshold $1$. 
} 
\begin{theorem}
For any time bound $\lambda>1$, Algorithm \ref{alg1} provides an optimal solution for the  TBI problem on any path $\jk{0}{n{-}1}$ in time $O(n)$.
\end{theorem}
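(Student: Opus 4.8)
The plan is to show separately that Algorithm~\ref{alg1} outputs a feasible incentive function whose cost equals $\jko{0}{n-1}$, and that it runs in linear time. For the optimality part I would argue that the algorithm merely evaluates, greedily and from left to right, the exact recurrence for $\jko{0}{n-1}$ furnished by Lemmas~\ref{lemmaSub2} and~\ref{lemmaTwoTwo}. First I would record the loop invariant: whenever the while-loop inspects the current path $\jk{i}{n-1}$, the nodes $i,\ldots,j$ preceding the leftmost $2$-path $\jk{j}{k}$ all have threshold $1$. This holds because the loop advances $i$ only past a processed $2$-path, and ``leftmost'' guarantees that no threshold-$2$ node lies strictly between $i$ and $j$.

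Given the invariant, I would treat each iteration as peeling off one block. If the $2$-path has $m=k-j-1\neq2$ interior nodes, Lemma~\ref{lemmaSub2} applied to the current subpath gives the equality $\jko{i}{n-1}=\jkdxr{i}{j}+(k-j-2)+\jksxl{k}{n-1}$. By Remark~\ref{obsDummy}, $\jkdxr{i}{j}$ is exactly the cost that Lemma~\ref{opt1} returns on the threshold-$1$ path $\jk{i}{j+1}$ (one right dummy), and symmetrically $\jksxl{k}{n-1}$ is realized by treating $k-1$ as a single left dummy of the next block; this is why the algorithm is entitled to assign the interior by the cost-$(k-j-2)$ pattern $0(20)^*$ or $01(20)^*$, solve $\jk{i}{j}$ with Lemma~\ref{opt1}, set $t(j+1)=t(k-1)=1$, and advance $i$ to $k-1$. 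Because Lemma~\ref{lemmaSub2} is an equality and the residual term $\jksxl{k}{n-1}$ is again an instance of the same problem, the recurrence telescopes over the successive blocks, the very last of which is a pure threshold-$1$ suffix charged optimally by Lemma~\ref{opt1}; feasibility of the produced assignment follows from the explicit patterns together with the fact (Lemma~\ref{opt1}) that each threshold-$1$ run is influenced within $\lambda$ rounds.

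For a $2$-path with $m=2$ I would verify that the algorithm's branch selects the smaller of the two terms in Lemma~\ref{lemmaTwoTwo}. By Lemma~\ref{opt1} and Remark~\ref{obsDummy}, the left summand of Case~$1$ equals $\jkdxr{i}{j}=\lceil (j-i+2)/(2\lambda+1)\rceil$ and that of Case~$2$ equals $\jkdxrr{i}{j}{2}=\lceil (j-i+3)/(2\lambda+1)\rceil$; these ceilings differ (by exactly one, Case~$2$ being larger) precisely when $j-i+2$ is a multiple of $2\lambda+1$, and otherwise coincide. Simultaneously, Lemma~\ref{lemmaRes}(2) gives $\jksxl{j+3}{n-1}\le\jksxll{j+3}{n-1}{2}\le\jksxl{j+3}{n-1}+1$, so the two right summands differ by at most one, with Case~$1$'s being the larger. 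Balancing these two $\pm1$ gaps shows that Case~$1$ is no worse exactly when $j-i+2$ is a multiple of $2\lambda+1$, and Case~$2$ is no worse otherwise, which is exactly the test ``$j-i+2=c(2\lambda+1)$'' used by the algorithm.

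The running time is then immediate. Identifying the leftmost $2$-path is a single left-to-right scan, and since $i$ only moves rightward, across the whole execution each edge is examined $O(1)$ times; every interior is filled by the periodic pattern in time linear in its length, and every call to Lemma~\ref{opt1} writes incentives directly from the closed form~(\ref{OptimalAssigment}) in time linear in the run it handles. As the blocks are essentially disjoint, the total work is $O(n)$. The part I expect to be the main obstacle is precisely the bookkeeping in the telescoping step: one must check that re-labelling the boundary interior nodes as threshold-$1$ and advancing $i$ to $k-1$ (respectively to $i'$ in the $m=2$ case) faithfully encodes the left boundary condition of the next block in the sense of Remark~\ref{obsDummy}, so that the per-block equalities of Lemmas~\ref{lemmaSub2} and~\ref{lemmaTwoTwo} compose into a single equality $\sum_{v} p(v)=\jko{0}{n-1}$ rather than merely an inequality.
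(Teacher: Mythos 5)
Your proposal is correct and follows essentially the same route as the paper's proof: peel off the leftmost 2-path, apply Lemma~\ref{lemmaSub2} or~\ref{lemmaTwoTwo} to the current subpath, use Remark~\ref{obsDummy} and Lemma~\ref{opt1} to identify $\jkdxr{i}{j}$ and $\jkdxrr{i}{j}{2}$ with ceilings that differ exactly when $j-i+2$ is a multiple of $2\lambda+1$, and balance that against the at-most-one gap between $\jksxll{k}{n-1}{2}$ and $\jksxl{k}{n-1}$ from Lemma~\ref{lemmaRes} to justify the branch test, then iterate on the suffix with the appropriate dummy nodes. The telescoping/bookkeeping step you flag as the main obstacle is handled in the paper at the same level of detail you give, so no substantive gap remains.
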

\proof
We  show that the algorithm  selects an optimal strategy according to the length and the position of the leftmost 2-path $\jk{j}{k}$   and then iteratively operates on the subpath $\jk{i}{n-1}$ where  $i=k-1$ (one dummy node on the left) or $i=k-2$ (two dummy nodes on the left). See Fig. \ref{fig1}.

Let $\jk{i}{n-1}$ be the current path and $\jk{j}{k}$ be the leftmost 2-path. If $\jk{j}{k}$ satisfies the hypothesis of Lemma \ref{lemmaSub2}, then we have \\
\hspace*{1truecm}$ \jko{i}{n-1}= \jkdxr{i}{j}+k-j-2 +  \jksxl{k}{n-1}.$\\
Hence, we can obtain   optimal   incentives for nodes $ i, \ldots,j $  by using the result in Lemma \ref{opt1} on $\jk{i}{j+1}$ (where $j+1$ is a dummy node). Moreover, we  assign $k-j-2$ incentives to the nodes $j + 1,\ldots, k - 1$ as suggested  in Lemma \ref{lemmaSub2} (i.e., $0(20)^*$ when the length of the 2-path is odd and  $01(20)^*$ otherwise) and the algorithm iterates on $\jk{k-1}{n-1}$ (where $k-1$ is a dummy node).

Now suppose that $\jk{j}{k=j+3}$ satisfies the hypothesis of  Lemma \ref{lemmaTwoTwo}. We have that
$\jko{i}{n{-}1}$ is equal to
\begin{equation} \label{opteq}
1+\min\Big\{\jkdxr{i}{j}+\jksxll{k}{n{-}1}{2},\jkdxrr{i}{j}{2}+\jksxl{k}{n{-}1}\Big\}.
\end{equation}
We have two cases to consider, according to the distance between $i$ and $j$.

\smallskip
\noindent
First assume that  $j-i+2=c(2\lambda+1)$ for some $c > 0$. 
	By Lemma \ref{opt1} and Remark \ref{obsDummy} we know that in this case 
	$\jkdxrr{i}{j}{2}=\jkdxr{i}{j}+1$ and since by (2) of Lemma \ref{lemmaRes} we know that $\jksxll{k}{n-1}{2} \leq \jksxl{k}{n-1} +1$, we have that 
$\jkdxr{i}{j}+\jksxll{k}{n-1}{2}$ corresponds to the minimum of equation (\ref{opteq}) and hence the solution described by case 1 in Lemma \ref{lemmaTwoTwo} (i.e.,  $p(j+1) = 0,p(j+2) = 1$) is optimal.
Incentives to $  i, \ldots,j $ are assigned exploiting the result in Lemma \ref{opt1} on $\jk{i}{j+1}$ (where $j+1$ is a dummy node) and the algorithm iterates on $\jk{k-2}{n-1}$ (where both $k-1$ and $k-2$ are dummy nodes).

\smallskip
\noindent
Now assume that $j-i+2\neq c(2\lambda+1)$ for some $c > 0$. 
In this case, we have $\jkdxrr{i}{j}{2}=\jkdxr{i}{j}.$ \\
By (1) of Lemma \ref{lemmaRes} we know that
 $\jksxl{k}{n-1} \leq \jksxll{k}{n-1}{2}$. Hence,  $\jkdxrr{i}{j}{2}+\jksxl{k}{n-1}$ corresponds to the minimum of equation (\ref{opteq}) and  the solution in  case 2 in Lemma \ref{lemmaTwoTwo}  (i.e.,  $p(j+1) = 1,p(j+2) = 0$) is optimal.
Incentives to $  i, \ldots,j $ are assigned using the result in Lemma \ref{opt1}  on $\jk{i}{j+2}$ (considering both $j+1$ and $j+2$  as dummy nodes) and the algorithm iterates on $\jk{k-1}{n-1}$ (where $k-1$ is a dummy node).

\smallskip

If there remains  a last subpath of nodes of threshold one, this  is solved optimally using Lemma \ref{opt1}.

\noindent
\textbf{Complexity.} The identification of the 2-paths and their classification can be easily done in linear time. Then, the algorithm operates in a single pass from left to right and  the  time is $O(n)$.
\qed

\remove{

\subsection{Rings}

Let $R_n$ denote the ring on $n$ nodes $\{0, \ldots, n - 1\}$ with edges $(i,(i {+} 1) \bmod n)$ and thresholds $t(i)\in\{1,2\}$, for $i = 0, \ldots , n - 1$.
As shown in Appendix \ref{App-Proofs}, one can easily  use the path algorithm to design an algorithm for the TBI problem on rings.

\begin{theorem} \label{ring}
For any time bound $\lambda>1$, an optimal solution for the  TBI problem on any ring $R_n$ can be computed in time $O(n)$.
\end{theorem}
}

\section{An $O(\lambda n \log n)$ Algorithm for Complete Graphs}\label{sec:complete}

In this section, we present an $O(\lambda n \log n)$ dynamic programming algorithm to allocate
incentives to the nodes of a complete network $K_n = (V,E)$.
We begin by proving that for any assignment of thresholds to the nodes of $K_n$,
there is an optimal solution in which the thresholds of
all nodes that are influenced {\em at} round $\ell$ are at least as large as
the thresholds of all nodes that are influenced before round $\ell$
for every $1 \leq \ell \leq \lambda$.

Let $K_m$ be the subgraph of $K_n$ that is induced by $V_m =\{v_1, v_2, \ldots, v_m\}$.
We will say that an incentive function $p:V_m\longrightarrow \N_0$ is
$\ell$\emph{-optimal for $K_m$}, $1\leq m \leq n$, $0 \leq \ell \leq \lambda$,
if $\sum_{v\in V_m} p(v)$ is the minimum cost to influence all nodes in
$V_m$ in $\ell$ rounds.

\begin{lemma}\label{sort-clique}
  Given $K_m$, thresholds $t(v_1) \leq t(v_2) \leq \ldots \leq t(v_m)$, and
  $1 \leq \ell \leq \lambda$,
  if there exists an $\ell$-optimal solution for $K_m$ that influences $k<m$
  nodes by the end of round $\ell-1$, then there is an $\ell$-optimal solution
  that influences $\{v_1, v_2, \ldots, v_k\}$ by the end of round $\ell-1$.
\end{lemma}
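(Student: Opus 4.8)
The plan is to prove the statement by an exchange (swap) argument applied to an extremal optimal solution. The key simplification offered by $K_m$ being complete is that, for a node $v$ not yet influenced, the number of its already-influenced neighbours equals the total number of currently influenced nodes; hence, writing $a_r = |\Infl[p,r]|$ for the profile of an incentive function $p$, a node $v$ is influenced by the end of round $r$ if and only if $p(v) \ge t(v) - a_{r-1}$ (with $a_{-1}=0$). First I would record the structural fact that, for an $\ell$-optimal $p$ with $a_{\ell-1}=k<m=a_\ell$, the profile is strictly increasing, $0<a_0<a_1<\cdots<a_{\ell-1}<a_\ell=m$; this holds because once a round adds no new node the process stalls forever, so reaching all $m$ nodes exactly at round $\ell$ forces strict growth at every round. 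This strict growth is what will let me place a single node at a prescribed round.

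Next I would fix an $\ell$-optimal solution $p$ that influences exactly $k$ nodes by round $\ell-1$ (one exists by hypothesis) and, among all such solutions, choose one minimizing $\sum_{v\in A}\iota(v)$, where $A=\Infl[p,\ell-1]$ and $\iota(v)$ is the index of $v$. If $A\neq\{v_1,\dots,v_k\}$ then, since $|A|=k$, there are indices $b\le k<a$ with $v_b\notin A$ and $v_a\in A$, so $t(v_b)\le t(v_a)$. Let $r_a\le\ell-1$ be the round at which $v_a$ is influenced. I would then modify $p$ only on these two nodes, setting $p'(v_b)=\max(0,t(v_b)-a_{r_a-1})$ and $p'(v_a)=\max(0,t(v_a)-a_{\ell-1})$, the minimal incentives that make $v_b$ enter at round $r_a$ and $v_a$ at round $\ell$.

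The core of the argument is to check that $p'$ is again an $\ell$-optimal solution with the two roles swapped. I would verify by induction on $r$ that the profile is preserved exactly, $\Infl[p',r]=\Infl[p,r]$ for $r<r_a$ and for $r\ge\ell$, and $\Infl[p',r]=(\Infl[p,r]\setminus\{v_a\})\cup\{v_b\}$ for $r_a\le r\le\ell-1$; in every case $|\Infl[p',r]|=a_r$, so all membership tests (which depend only on these counts) stay consistent. The two facts that make the placement feasible are exactly the strict growth $a_{r-1}>a_{r-2}$ (so an integer incentive placing a node at a prescribed round exists) and the inequality $t(v_b)>a_{\ell-2}$ coming from $v_b\notin\Infl[p,\ell-1]$ (which rules out $v_b$ being pulled in too early). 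For the cost, using $p(v_a)\ge\max(0,t(v_a)-a_{r_a-1})$ and $p(v_b)\ge\max(0,t(v_b)-a_{\ell-1})$ together with the elementary hinge inequality $\max(0,X-P)+\max(0,Y-Q)\le\max(0,X-Q)+\max(0,Y-P)$, valid whenever $Y\le X$ and $Q\le P$, gives $\sum_v p'(v)\le\sum_v p(v)$; optimality of $p$ then forces equality, so $p'$ is $\ell$-optimal as well.

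Finally I would close by extremality: the swap strictly decreases $\sum_{v\in A}\iota(v)$ by $a-b>0$ while keeping exactly $k$ nodes influenced by round $\ell-1$, contradicting the minimality of the chosen solution unless $A=\{v_1,\dots,v_k\}$. The main obstacle I anticipate is the consistency check for the swapped solution — showing that reassigning incentives to $v_a$ and $v_b$ reproduces the same profile without cascading changes to the other nodes; the strict monotonicity of the profile and the threshold bound from $v_b\notin\Infl[p,\ell-1]$ are precisely the ingredients needed to control this, while the cost comparison reduces cleanly to the hinge inequality above.
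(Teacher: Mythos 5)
Your proposal is correct and follows essentially the same exchange argument as the paper: repeatedly swap a high-threshold node influenced early with a low-threshold node influenced at round $\ell$, reassigning incentives so each takes the other's entry round, and observe that the cost does not increase. Your version is somewhat more careful than the paper's (the $\max(0,\cdot)$ clipping, the hinge inequality in place of the paper's asserted zero cost difference, and the explicit profile-preservation check using strict growth of $a_r$ and $t(v_b)>a_{\ell-2}$), but the underlying idea is identical.
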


\remove{\begin{outline}
  The proof is an exchange argument that transforms an $\ell$-optimal incentive function $p^*$ into an $\ell$-optimal incentive function $p$ that influences $\{v_1, v_2, \ldots, v_k\}$ by the end of round $\ell-1$. \qed
\end{outline}}
\proof 
  Let $p^*$ be an $\ell$-optimal incentive function for $K_m$ that influences a set $V_k^*=\{u_1, u_2, \ldots, u_k\}$ of $k$ nodes of $K_m$ by the end of round $\ell-1$.
  We will show how to construct an $\ell$-optimal incentive function for $K_m$ that influences nodes $V_k=\{v_1, v_2, \ldots, v_k\}$ by the end of round $\ell-1$ where $t(v_1) \leq t(v_2) \leq \ldots \leq t(v_k)$ and $t(v_j)\geq t(v_k)$ for $j=k+1,k+2,\ldots,m$.

  Suppose that $p$ is an incentive function for $K_m$ that influences nodes $V_k=\{v_1, v_2, \ldots, v_k\}$ by the end of round $\ell-1$. If $V_k^*$ is different from $V_k$, then there is some $u_i \in V_k^* \backslash V_k$ and some $v_j \in V_k \backslash V_k^*$ such that $t(u_i)\geq t(v_j)$.
  Since $v_j$ is influenced \emph{at} round $\ell$ in the $\ell$-optimal solution $p^*$, it must require the influence of $t(v_j)-p^*(v_j)$ neighbours.
  (If it required the influence of fewer neighbours, then $p^*$ would not be $\ell$-optimal.)
  Note that $t(v_j)-p^*(v_j)\geq 0$.
  Similarly, $u_i$ requires the influence of $t(u_i)-p^*(u_i)\geq 0$ neighbours.
  Consider the set of nodes $V_k^* \cup \{v_j\} \backslash \{u_i\}$ and define $p$ as follows.
  Choose $p(v_j)$  and $p(u_i)$ as
   
	\smallskip
   \centerline{$ t(v_j)-p(v_j)=t(u_i)-p^*(u_i) \quad \mbox{ and }\quad  t(u_i)-p(u_i)=t(v_j)-p^*(v_j)$}
	
	\smallskip
	\noindent
  so that $v_j$ is influenced at the same round as $u_i$ was influenced in the $\ell$-optimal solution and  $u_i$ is influenced at round $\ell$.
  Set  $p(v)=p^*(v)$ for all other nodes in $K_m$.
  The difference in value between $p$ and $p^*$ is
  
	\smallskip
   \centerline{$
    p(v_j){+}p(u_i){-}p^*(v_j){-}p^*(u_i) = 0
$}
	
	\smallskip
	\noindent
	We can iterate until we find an $\ell$-optimal solution that influences $\{v_1, v_2, \ldots, v_k\}$ by the end of round $\ell-1.\  \Box$

\medskip
By Lemma~\ref{sort-clique}, our algorithm can first sort the nodes by
non-decreasing threshold value w.l.o.g.
The sorting can be done in $O(n)$ time using counting sort because
$1 \leq t(v) \leq n-1 = d(v)$ for all $v\in V$.
In the remainder of this section, we assume that $t(v_1) \leq t(v_2) \leq \ldots \leq t(v_n)$.

Let $Opt_{\ell}(m)$ denote the value of an $\ell$-optimal solution for $K_m$,
$1\leq m \leq n$, $0 \leq \ell \leq \lambda$.
Any node $v$ that is influenced at round 0 requires incentive $p(v)=t(v)$ and
it follows easily that

\begin{equation}\label{clique-column0}
  Opt_0(m)=\sum_{i=1}^m t(v_i),\; 1\leq m \leq n.
\end{equation}

\noindent
Now consider a value $Opt_{\ell}(m)$ for some $1\leq m \leq n$ and $1 \leq \ell \leq \lambda$.
If exactly $j$ nodes, $1\leq j\leq m$, are influenced by the end of round $\ell-1$ in an $\ell$-optimal solution for $K_m$, then each of the $m-j$ remaining nodes in $V_m$ has $j$ influenced neighbours at the beginning of round $\ell$ and these neighbours are $v_1, v_2, \ldots, v_j$ by Lemma~\ref{sort-clique}.
For such a remaining node $v$ to be influenced at round $\ell$, either
$t(v)\leq j$ or $v$ has an incentive $p(v)$ such that $t(v)-p(v)\leq j$.
It follows that

\noindent
\begin{equation}\label{clique-recurrence}
  Opt_{\ell}(m)=\min_{1\leq j\leq m} \Bigl\{Opt_{\ell-1}(j) + \sum_{i=j+1}^m \max\{0,t(v_i)-j\}\Bigr\},\; 1\leq m \leq n.
\end{equation}

We will use $Ind_{\ell}(m)$ to denote the index that gives the optimal value $Opt_{\ell}(m)$, that is,

\noindent
\begin{equation}\label{index-clique-recurrence}
  Ind_{\ell}(m) = \argmin_{1 \leq j \leq m} \Bigl\{Opt_{\ell-1}(j) + \sum_{i=j+1}^m \max\{0,t(v_i)-j\}\Bigr\},\; 1\leq m \leq n.
\end{equation}

A dynamic programming algorithm that directly implements the recurrence in equations~(\ref{clique-column0}) and~(\ref{clique-recurrence}) will produce the optimal solution value $Opt_{\lambda}(n)$ in time $O(\lambda n^3)$.
We can reduce the complexity by taking advantage of some structural properties.

\begin{lemma}\label{clique-property4}
  For any $1 \leq \ell \leq \lambda$, if $k < m$ then $Ind_{\ell}(k) \leq Ind_{\ell}(m)$, $1 \leq k\leq n-1$, $2 \leq m\leq n$.
\end{lemma}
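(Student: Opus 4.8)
The plan is to recognize this as the standard monotonicity-of-argmin property underlying Knuth-style dynamic-programming speedups, and to derive it from a quadrangle (Monge) inequality satisfied by the quantity being minimized in~(\ref{clique-recurrence}). I would write
$$f_m(j) = Opt_{\ell-1}(j) + \sum_{i=j+1}^m \max\{0, t(v_i)-j\},$$
so that $Opt_{\ell}(m) = \min_{1\le j\le m} f_m(j)$ and $Ind_{\ell}(m) = \argmin_{1\le j \le m} f_m(j)$. For definiteness I would fix the convention that $Ind_{\ell}(m)$ is the \emph{smallest} index attaining the minimum (any consistent tie-breaking rule works, and the algorithm only needs some optimal index together with the monotonicity). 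With $\ell$ held fixed, everything reduces to comparing the single-variable functions $f_k$ and $f_m$ for $k<m$.

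First I would isolate the incremental cost of raising the upper limit of the summation. For $j \le m_1 \le m_2$,
$$f_{m_2}(j) - f_{m_1}(j) = \sum_{i=m_1+1}^{m_2} \max\{0, t(v_i)-j\},$$
since the $Opt_{\ell-1}(j)$ term cancels and the sum merely gains the terms indexed $m_1+1,\dots,m_2$. The crucial observation is that each term $\max\{0, t(v_i)-j\}$ is non-increasing in $j$, hence so is the whole increment. Consequently, for any $j_1 < j_2 \le m_1 < m_2$ the increment evaluated at $j_2$ is at most the increment evaluated at $j_1$, which rearranges to the quadrangle inequality
$$f_{m_1}(j_1) + f_{m_2}(j_2) \le f_{m_1}(j_2) + f_{m_2}(j_1).$$
I expect this step — verifying the Monge inequality, in particular getting the direction of the inequality and the index ranges right — to be the only real content; the remainder is a routine exchange argument.

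Finally I would conclude by contradiction. Suppose $k<m$ but $Ind_{\ell}(m) < Ind_{\ell}(k)$, and set $j_1 = Ind_{\ell}(m) < j_2 = Ind_{\ell}(k)$; note $j_1 < j_2 \le k \le m$, so both indices lie in both domains and the inequality applies with $m_1 = k$, $m_2 = m$. Since $j_2$ is the smallest minimizer at $k$ and $j_1 < j_2$, the index $j_1$ is strictly suboptimal there, i.e.\ $f_k(j_1) > f_k(j_2)$. Feeding this into $f_k(j_1) + f_m(j_2) \le f_k(j_2) + f_m(j_1)$ gives $f_m(j_2) < f_m(j_1)$, contradicting the choice of $j_1 = Ind_{\ell}(m)$ as a minimizer of $f_m$. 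Hence $Ind_{\ell}(k) \le Ind_{\ell}(m)$, as required. I would remark that the argument never invokes the sorted order of the thresholds — monotonicity in $j$ of each $\max\{0, t(v_i)-j\}$ holds term by term — so the property is robust; the only points demanding care are the domain restriction $j_2 \le m_1$ and the strictness of $f_k(j_1) > f_k(j_2)$ supplied by the tie-breaking convention.
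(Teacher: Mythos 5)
Your proof is correct, and at its core it runs on the same engine as the paper's: both arguments reduce to the fact that the incremental cost $\sum_{i=k+1}^{m}\max\{0,t(v_i)-j\}$ of extending the clique from $K_k$ to $K_m$ is non-increasing in $j$, combined with an exchange between the two candidate indices. The paper writes this out as a direct chain of (in)equalities starting from $Opt_{\ell}(m)$ evaluated at $Ind_{\ell}(m)$, whereas you first package the monotonicity as a quadrangle (Monge) inequality $f_{k}(j_1)+f_{m}(j_2)\le f_{k}(j_2)+f_{m}(j_1)$ and then invoke the standard argmin-monotonicity argument. The one substantive difference is where the strictness needed for the contradiction comes from. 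The paper claims $\max\{0,t(v_i)-Ind_{\ell}(m)\}>\max\{0,t(v_i)-Ind_{\ell}(k)\}$ for every $i$ with $k+1\le i\le m$, which is false whenever $t(v_i)\le Ind_{\ell}(m)$ (both sides are then $0$); so in the presence of ties the paper's chain only yields $Opt_{\ell}(m)\ge f_m(Ind_{\ell}(k))$, which is not by itself a contradiction. You instead fix a tie-breaking convention (smallest minimizer) and extract the strict inequality $f_k(j_1)>f_k(j_2)$ from the fact that $j_1$ is strictly below the smallest minimizer of $f_k$. This is the more robust way to handle ties and is exactly what the dynamic program needs for the disjointness of the search ranges in COMPUTE-INDEX; your version quietly repairs a small imprecision in the paper's argument. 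Your closing remark that sortedness of the thresholds is not needed for this particular lemma is also accurate (sortedness is used in Lemma~\ref{sort-clique} and in the $O(1)$ evaluation of $A(j)$, not here).
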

\begin{proof}{}
If $k < Ind_{\ell}(m)$, the lemma trivially holds because $Ind_{\ell}(k) \leq k$.
Now assume that $k \geq Ind_{\ell}(m)$ and suppose that $Ind_{\ell}(k)  > Ind_{\ell}(m)$ contrary to the statement of the lemma.
To simplify notation, let $\k=Ind_{\ell}(k)$ and $\m=Ind_{\ell}(m)$.
Then
\begin{eqnarray} \nonumber
    Opt_{\ell}(m) &=& Opt_{\ell-1}(\m) + \sum_{i=\m+1}^{m} \max\{0,t(v_i)-\m\}\\ \nonumber
		&=& Opt_{\ell-1}(\m) + \sum_{i=\m+1}^{k} \max\{0,t(v_i)-\m\} +	
		\sum_{i=k+1}^{m} \max\{0,t(v_i)-\m\}\\ \label{neq3}
		&\geq& Opt_{\ell-1}(\k) + \sum_{i=\k+1}^{k} \max\{0,t(v_i)-\k\}+
		\sum_{i=k+1}^{m} \max\{0,t(v_i)-\m\}  \\  \label{neq2}
		&>& Opt_{\ell-1}(\k) + \sum_{i=\k+1}^{k} \max\{0,t(v_i)-\k\}+
		\sum_{i=k+1}^{m} \max\{0,t(v_i)-\k\}  \\ \nonumber
    &=& Opt_{\ell-1}(\k) + \sum_{i=\k+1}^{m} \max\{0,t(v_i)-\k\}
  \end{eqnarray}
	 The inequality (\ref{neq3}) follows because $\k=Ind_{\ell}(k)$, and
 the inequality (\ref{neq2}) follows because 
$\max\{0,t(v_i)-\m\} > \max\{0,t(v_i)-\k\}$ for each $k+1 \leq i \leq m$.			
  This is a contradiction because $\m=Ind_{\ell}(m)$.
\qed
\end{proof}

\begin{theorem}\label{thm:clique}
  For any complete network $K_n=(V,E)$, threshold function
  $t:V\longrightarrow \N$, and $\lambda \geq 1$, the TBI problem can be solved
  in time $O(\lambda n \log n)$.
\end{theorem}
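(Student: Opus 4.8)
The plan is to turn the naive $O(\lambda n^3)$ dynamic program based on equations~(\ref{clique-column0}) and~(\ref{clique-recurrence}) into an $O(\lambda n\log n)$ one by combining two ingredients: an $O(1)$-time evaluation of the cost term appearing inside the minimization, and the divide-and-conquer optimization enabled by the monotonicity of the optimal index proved in Lemma~\ref{clique-property4}.

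First I would reduce each cost evaluation to constant time. After the $O(n)$ counting sort justified by Lemma~\ref{sort-clique}, I would precompute in $O(n)$ time the prefix sums $T[m]=\sum_{i=1}^{m} t(v_i)$ and, for every $j\in\{0,1,\ldots,n-1\}$, the count $s_j$ of nodes with threshold at most $j$ (possible in linear time because $1\le t(v)\le n-1$). Since the thresholds are sorted, the nodes counted by $s_j$ are exactly $v_1,\ldots,v_{s_j}$, so $\max\{0,t(v_i)-j\}=t(v_i)-j$ precisely when $i>s_j$. Writing $a=\max\{j+1,\,s_j+1\}$, the inner sum becomes
\begin{equation*}
\sum_{i=j+1}^{m}\max\{0,t(v_i)-j\}=\big(T[m]-T[a-1]\big)-j\,(m-a+1)
\end{equation*}
whenever $a\le m$, and $0$ otherwise. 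Hence, given the already-computed table $Opt_{\ell-1}(\cdot)$, the whole bracketed quantity in~(\ref{clique-recurrence}) can be evaluated for any pair $(j,m)$ in $O(1)$ time.

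Next I would compute each row $Opt_{\ell}(1),\ldots,Opt_{\ell}(n)$ by divide-and-conquer. The base row $Opt_0(m)=T[m]$ is given by~(\ref{clique-column0}). For a fixed $\ell\ge 1$, to fill the entries with $m\in[lo,hi]$ knowing that the minimizing indices lie in $[optlo,opthi]$, I would set $mid=\lfloor (lo+hi)/2\rfloor$, scan $j\in[optlo,opthi]$ to find $Opt_{\ell}(mid)$ and its argmin $Ind_{\ell}(mid)$ in $O(opthi-optlo+1)$ time, and then recurse on $[lo,mid-1]$ with candidate range $[optlo,Ind_{\ell}(mid)]$ and on $[mid+1,hi]$ with candidate range $[Ind_{\ell}(mid),opthi]$. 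Lemma~\ref{clique-property4} guarantees that the true argmin of each entry indeed lies in the restricted range, so the recursion is correct; the standard accounting shows that each of the $O(\log n)$ recursion levels performs $O(n)$ work, giving $O(n\log n)$ per value of $\ell$ and $O(\lambda n\log n)$ overall. The answer is $Opt_{\lambda}(n)$, and the incentive function itself is recovered by tracing back through the stored indices $Ind_{\ell}(\cdot)$.

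The step I expect to be the most delicate is the complexity bookkeeping of the divide-and-conquer recursion, namely verifying that the candidate ranges handed to sibling subproblems overlap only at the single index $Ind_{\ell}(mid)$, so that the total length of the $j$-ranges scanned at one recursion level telescopes to $O(n)$ rather than blowing up. This, together with the $O(1)$ cost evaluation from the first step, is exactly what removes two factors of $n$ from the naive bound. Correctness of the whole scheme rests on Lemma~\ref{clique-property4}, while the reduction to $O(1)$ evaluations rests on the sortedness furnished by Lemma~\ref{sort-clique}.
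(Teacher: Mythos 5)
Your proposal is correct and follows essentially the same route as the paper: the same $O(1)$ evaluation of $\sum_{i=j+1}^{m}\max\{0,t(v_i)-j\}$ via threshold prefix sums and a precomputed threshold-count (the paper's vector $a$ is just the complementary form of your $s_j$), and the same divide-and-conquer computation of each column of optimal indices justified by the monotonicity in Lemma~\ref{clique-property4}, which is exactly the paper's COMPUTE-INDEX procedure with the candidate ranges passed as parameters instead of read from sentinel rows. The complexity accounting and the backtracking step also match the paper's argument.
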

\begin{proof}
Our dynamic programming algorithm computes two $n \times (\lambda+1)$ arrays
$\VAL$ and $\IND$ and returns a solution $p$ of $n$  incentives.
$\VAL[m,\ell]=Opt_{\ell}(m)$ is the value of an $\ell$-optimal solution for $K_m$ (for a given threshold function $t: V \longrightarrow \N$), and $\IND[m,\ell]=Ind_{\ell}(m)$ is the index that gives the optimal value, $1\leq m \leq n$, $0 \leq \ell \leq \lambda$.

The array entries are computed column-wise starting with column 0.
The entries in column $\VAL[*,0]$ are  sums of thresholds according to~(\ref{clique-column0}) and the indices in $\IND[*,0]$ are all $0$, so these columns can be computed in time $O(n)$.
In particular,  $\VAL[j,0]=\sum_{i=1}^j t(v_i)$, $j=1,\ldots,m$.

Suppose that columns $1,2,\ldots, \ell-1$ of $\VAL$ and $\IND$ have been computed according to~(\ref{clique-recurrence}) and~(\ref{index-clique-recurrence}) and consider the computation of column $\ell$ of the two arrays.
To compute $\IND[m,\ell]$ for some fixed $m$, $1\leq m \leq n$, we define a function

\centerline{$A(j)=Opt_{\ell-1}(j) + \sum_{i=j+1}^m \max\{0,t(v_i)-j\}, \quad 1\leq j\leq m$}

\smallskip
\noindent
and show how to compute each $A(j)$ in $O(1)$ time.\\
 By (\ref{index-clique-recurrence}),  $\,\displaystyle Ind_{\ell}(m)= \argmin\{A(j)\,|\, 1 \leq j \leq m\}$.

First we compute an auxiliary vector $a$ where $a[j]$ contains the smallest integer $i \geq 1$ such that $t(v_i)\geq j$, $1 \leq j \leq n$.
This vector can be precomputed once in $O(n)$ time because the nodes are sorted by non-decreasing threshold value.
Furthermore, the vector $a$ together with the entries in column $\VAL[*,0]$ allow the computation of $\sum_{i=j+1}^{m} \max\{0,t(v_i)-j\}$ in $O(1)$ time for each $1 \leq j \leq n$.
Since $Opt_{\ell-1}(j)=\VAL[j,\ell-1]$ has already been computed, we can compute $A(j)$ in $O(1)$ time.
The values $Opt_{\ell}(m)=\VAL[m,\ell]$ can also be computed in $O(1)$ time for each $1 \leq m \leq n$ given $Ind_{\ell}(m)= \IND[m,\ell]$, vector $a$, and column $\VAL[*,0]$.
The total cost so far is $O(\lambda n)$.
It remains to show how to compute each column $\IND[*,\ell]$ efficiently.

\remove{
\begin{eqnarray*}
\sum_{i=j+1}^{m} \max\{0,t(v_i)-j\} 
&= &\sum_{i=a[j]+1}^{m} t(v_i)-j \\
&= & \sum_{i=a[j]+1}^{m} t(v_i) - \sum_{i=a[j]+1}^{m} j \\
&= & \sum_{i=a[j]+1}^{m} t(v_i) -  j (m-a[j]) \\
&= & \sum_{i=1}^{m} t(v_i) - \sum_{i=1}^{a[j]} t(v_i) -  j (m-a[j])\\
&= & b[m] - b[a[j]] - j (m-a[j])
\end{eqnarray*}
}

The following algorithm recursively computes the column $\IND[m,\ell]$, $1 \leq m \leq n$ assuming that columns $0,1,2, \ldots, \ell-1$ of  $\IND$ and $\VAL$ have already been computed. The algorithm also assumes that two dummy rows have been added to array $\IND$ with $\IND[0,\ell]=1$ and $\IND[n+1,\ell]=n$, $0 \leq \ell \leq \lambda$, to simplify the pseudocode.
The initial call of the algorithm is COMPUTE-INDEX($1,n$).

We claim that algorithm COMPUTE-INDEX($1, n$) correctly computes the values 
$\IND[m,\ell]$  for $1 \leq m \leq n$.
First, it can be proved by induction that when 
we call COMPUTE-INDEX($x,y$), the indices $\IND[x-1,\ell]$ 
and $\IND[y+1,\ell]$ have already been correctly computed.
By Lemma~\ref{clique-property4}, $Ind_{\ell}(x-1) \leq Ind_{\ell}(\lceil\frac{x+y}{2}\rceil) \leq Ind_{\ell}(y+1)$, so the algorithm correctly searches for $\IND[m,\ell]$ between $\IND[x-1,\ell]$ and $\IND[y+1,\ell]$.
 
\begin{algorithm}
\SetCommentSty{footnotesize}
\SetKwInput{KwData}{Input}
\SetKwInput{KwResult}{Output}
\DontPrintSemicolon
\caption{ \ \    COMPUTE-INDEX($x,y$) \label{alg2}}
\KwData {Indices $x,y$. } 
\KwResult{The values $\IND[i,\ell]$ for $i=x, \ldots y$.}
\setcounter{AlgoLine}{0}
\If(\tcp*[f]{Assume that $\IND[0,\ell]=1$ and $\IND[n+1,\ell]=n$}){$x \leq y$}{
$m=\lceil\frac{x+y}{2}\rceil$;\\
$\IND[m,\ell]= \argmin\left\{A(j)\, |\ \IND[x-1,\ell] \leq j \leq \min\{\IND[y+1,\ell],m\}\right\} $;\\
COMPUTE-INDEX($x,m-1$);\\
COMPUTE-INDEX($m+1,y$);} 
\end{algorithm}
 
It is not hard to see that the height of the recursion tree  obtained calling COMPUTE-INDEX($1, n$) is $\lceil \log (n+1)\rceil$.
Furthermore, the number of values $A(j)$ computed at each level of the recursion tree is $O(n)$ because the ranges of the searches in line 3 of the algorithm do not overlap (except possibly the endpoints of two consecutive ranges) by Lemma~\ref{clique-property4}. Thus, the computation time at each level is $O(n)$, and the computation time for each column $\ell$ is $O(n\log n)$.
After all columns of $\VAL$ and $\IND$ have been computed, the value of the optimal solution will be in $\VAL[n,\lambda]$.
The round during which each node is influenced and the optimal function $p$ of  incentives can then be computed by backtracking through the array $\IND$ in time $O(\lambda +n)$.
The total complexity is $O(\lambda n \log n)$.
\qed
\end{proof}
\section{A Polynomial-Time Algorithm for Trees} \label{sec:trees}

In this section, we give an algorithm for the TBI problem on trees.
Let $T = (V,E)$ be a tree having $n$ nodes
and  the maximum  degree $\Delta$ .
We will assume that $T$ is rooted at some node $r$.
Once such a rooting is fixed, for any node $v$, we  denote by $T_v$ the  subtree rooted at $v$,
and by $C(v)$ the set of children of $v$.
We will develop a dynamic programming algorithm that will  prove the following theorem.
\begin{theorem}\label{theorem-tree}
For any $\lambda>1,$ the TBI problem can be
solved in  time $O(n\lambda^2 \Delta)$ on  a tree having $n$ nodes and  maximum degree $\Delta$.
\end{theorem}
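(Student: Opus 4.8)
The plan is to root $T$ at an arbitrary node $r$ and compute, in a single bottom-up pass, a small table of optimal subtree costs at every node. The structural fact that drives everything is exactly the one already used for paths in Lemma~\ref{lemmaLB}: the edge joining a node $v$ to its parent can transmit influence in at most one direction, so for that edge precisely one of three things happens --- $v$ helps its parent, the parent helps $v$, or neither. I would therefore maintain, for every node $v$ and every round $\ell\in\{0,1,\dots,\lambda\}$, two quantities: $\mathrm{up}(v,\ell)$, the minimum cost to influence all of $T_v$ within $\lambda$ rounds subject to $v$ being influenced by round $\ell$ using only its own incentive and the influence coming up from its children in $C(v)$ (no help from the parent); and $\mathrm{down}(v,\ell)$, the analogous minimum when $v$ additionally receives one unit from its parent, so that its threshold effectively drops by one and the parent must be active by round $\ell-1$. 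The optimum for the whole instance is then $\mathrm{up}(r,\lambda)$, since $r$ has no parent and every node must be active by round $\lambda$.

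For the recurrence at an internal node $v$, I would argue that an optimal solution is obtained by choosing a subset $H\subseteq C(v)$ of children that push influence up to $v$. Each $w\in H$ must be active by round $\ell-1$ and contributes its subtree cost $\mathrm{up}(w,\ell-1)$; each remaining child is, without loss of generality, helped by $v$ (which costs $v$ nothing once $v$ is active and never increases a child's cost), contributing $\mathrm{down}(w,\cdot)$. With $|H|=h$, node $v$ needs incentive $\max\{0,\ t(v)-b-h\}$, where $b\in\{0,1\}$ records whether the parent helps $v$. Hence $\mathrm{up}(v,\ell)$ (the case $b=0$) and $\mathrm{down}(v,\ell)$ (the case $b=1$) are each the minimum over $h$ of $\max\{0,t(v)-b-h\}+\sum_{w\notin H}\mathrm{down}(w,\cdot)+\sum_{w\in H}\mathrm{up}(w,\ell-1)$, where for fixed $h$ the best $H$ is the set of $h$ children with the smallest differences $\mathrm{up}(w,\ell-1)-\mathrm{down}(w,\cdot)$. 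This is a bounded knapsack over the children in which every item has weight one, so for fixed $\ell$ it is solved for all admissible $h$ by one sweep. The base case is a leaf, which needs $p(v)=t(v)$ unless it is helped.

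The point needing care, which I expect to be the main obstacle, is the timing of the helped (``down'') children and the monotonicity that lets their deadline be pushed to the latest possible value. When $v$ is active by round $\ell$ it can only help a child from round $\ell+1$ onward, so a helped child may be influenced at any round in $\{\ell+1,\dots,\lambda\}$; since a looser deadline never increases cost, $\mathrm{down}(w,\cdot)$ is non-increasing in its round argument and the binding value is the one at round $\lambda$ (the case $\ell=\lambda$, in which $v$ can help no child, must be treated separately). Establishing this monotonicity, and checking that the exchange between the two edge directions is always cost-neutral or improving, is the crux; once it is in place the correctness of the recurrence follows by induction on the subtree, and the optimal $p$ is recovered by tracing the chosen sets $H$ back down from $r$.

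For the running time, the work is dominated by the child-combination step. At a node $v$ it is carried out for each of the $O(\lambda)$ deadlines of $v$, and charging $O(\lambda)$ to each child for scanning the possible rounds at which it is influenced (equivalently, the arrival round of $v$'s influence) makes one combination cost $O(\lambda\, d(v))$; over all deadlines this is $O(\lambda^2\, d(v))$ per node. Bounding $d(v)\le\Delta$ gives $O(\lambda^2\Delta)$ per node, and summing over the $n$ nodes yields the claimed total of $O(\lambda^2\Delta n)$.
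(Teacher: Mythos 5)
Your architecture is essentially the paper's: root the tree, do a post-order pass, and keep for each node $v$ and each round $\ell$ two values distinguished by whether the parent contributes a unit of influence --- these are exactly the paper's $P[v,\ell,t(v)]$ and $P[v,\ell,t'(v)]$ of Definition~\ref{defi:MIS}. Your child-combination step (choose how many children push influence up to $v$ by round $\ell-1$ versus are helped by $v$ afterwards) matches Definition~\ref{def:Amax} and Lemma~\ref{prop:Amax}; that you resolve the selection for fixed $h$ by sorting on the differences $\mathrm{up}(w,\ell-1)-\mathrm{down}(w,\cdot)$ rather than by the paper's inner DP $A_{v,\ell}[i,j]$ is a legitimate variant (unit-weight items, standard exchange argument) and does not affect correctness or the bound.

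The genuine gap is the monotonicity claim you lean on: that $\mathrm{down}(w,\cdot)$ is non-increasing in its round argument, so the helped children's contribution can be collapsed to its value at round $\lambda$. The paper explicitly warns (in the remark following Definition~\ref{defi:MIS}) that $P[v,\ell,t]$ is \emph{not} monotone in $\ell$, and the reason is structural: the quantity your recursion actually computes for state $(w,\ell')$ is the cheapest solution in which $w$ begins helping \emph{its own} children only from round $\ell'+1$ (this is how the parent's influence is operationalized in Definition~\ref{def:Amax}(i)). Increasing $\ell'$ loosens the deadline on the children that help $w$ but delays the help $w$ passes down, leaving its helped descendants fewer rounds before the global deadline $\lambda$ and possibly forcing extra incentives deep in $T_w$; so the computed table is not monotone. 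Consequently, evaluating at round $\lambda$ is wrong under either reading: the recursively computed $\mathrm{down}(w,\lambda)$ corresponds to $w$ never helping its children, while the ``true'' deadline-$\lambda$ optimum would admit solutions in which $w$ is activated before $v$ could possibly have supplied its unit. The correct step is the paper's $\min_{\ell+1\leq\ell'\leq\lambda}P[w,\ell',t'(w)]$ as in equation~(\ref{Amaxsingle}), and likewise the final answer must be $\min_{0\leq\ell\leq\lambda}P[r,\ell,t(r)]$ rather than the single value $\mathrm{up}(r,\lambda)$. The good news is that this repair costs nothing asymptotically --- your complexity accounting already charges $O(\lambda)$ per child for scanning the admissible arrival rounds --- but as written the correctness argument rests on a false premise, and you should replace the monotonicity lemma by the explicit minimization over the window $\{\ell+1,\dots,\lambda\}$.
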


 The rest of this section is devoted to the description and analysis of the
  algorithm that proves  Theorem \ref{theorem-tree}.
The algorithm   performs a post-order traversal of the tree $T$ so that each node is considered
after all of its children have been processed. For each node $v$, the algorithm solves some TBI problems on the subtree $T_v$, with some restrictions on the node $v$ regarding its threshold and the round during which it is influenced.
For instance, in order to compute some of these values
we will consider not only the original threshold $t(v)$ of $v$, but also
the reduced threshold $t'(v)=t(v)-1$ which simulates the influence of the parent node.
\def\MIS{P}
\begin{definition} \label{defi:MIS}
For each node $v\in V$,  integers $\ell \in\{0,1,\ldots,\lambda\}$, and 
$t\in \{t'(v),t(v) \}$, let us
denote by $\MIS[v,\ell,t]$   the minimum cost of influencing all of the nodes in $T_v$,
in at most $\lambda$ rounds,  assuming that\\
$\bullet$ the threshold of $v$ is $t$, and for every $u\in V(T_v)\setminus\{v\}$, the threshold of $u$ is $t(u)$;\\
$\bullet$  $v$ is influenced by round $\ell$ in $T_v$ and is  able to start influencing its neighbours by round $\ell +1$.\footnote{Notice that this does not exclude the case that
$v$ becomes an influenced node at some round $\ell' < \ell$.}
\end{definition}
\noindent{Formally the value of $\MIS[v,\ell,t]$ corresponds to\\		 
$\displaystyle \MIS[v,\ell,t]  = \min_{\substack{p: T_v  \to\N_0, \ \Infl_{T_v}[p,\lambda]= T_v \\ |C(v)   \cap \Infl_{F(v,d)}[p,\ell-1]| \geq t-p(v) }  } \left \{ \sum_{v\in T_v} p(v) \right\}$}
We set  $\MIS[v,\ell,t]= \infty$ when the above problem is infeasible.
{Denoting  by $p_{v,\ell,t}: V(T_v) \rightarrow \N_0$ the incentive function attaining the value $\MIS[v, \ell,t]$,} the parameter $\ell$ is such that:
\begin{enumerate}
	\item if $\ell=0$ then $p_{v,\ell,t}(v)=t$,
	\item otherwise, $v$'s  children can influence $v$ at round $\ell$, i.e. $|\{C(v)  \cap \Infl[p_{v,\ell,t},\ell-1]\}| \geq t-p_{v,\ell,t}(v)$.
\end{enumerate}

\begin{remark}
It is worthwhile mentioning that  $\MIS[v, \ell,t]$ is monotonically  non-decreasing in $t$.
However, $\MIS[v,\ell,t]$ is not necessarily  monotonic in $\ell$.
\end{remark}
{Indeed, partition the set $C(v)$ into two sets: $C'(v)$, which contains the $c$ children that influence $v$, and $C''(v)$,  which contains the remaining $|C(v)|-c$ children that may be influenced by $v$. 
A small value of $c$ may require a higher cost on subtrees rooted at a node $u \in C'(v)$, and may save some budget on the remaining subtrees; the opposite happens for a large value of $c$.}

The minimum cost  to influence  the nodes in $T$    in  $\lambda$ rounds  {follows from decomposing
the optimal solution according to the round on which the root is influenced and} 
can then be obtained by computing
\begin{equation}\label{eq-mas}
\min_{0\leq \ell \leq \lambda} \ \MIS[r,\ell,t(r)].
\end{equation}
We  proceed using a post-order traversal of the tree, so that the computations of the various values $\MIS[v,\ell,t]$ for a node $v$ are done after all of the values for $v$'s children are known.
For each leaf node $v$ we have

\begin{equation}\label{eq-casel}
\MIS[v, \ell,t] =  \begin{cases} 1 & \mbox{ if } \ell=0 \mbox{ and } t=t(v)=1
\\
0 & \mbox{ if } 1\leq \ell \leq \lambda \mbox{ and } t=t(v)-1=0
\\
\infty &  \mbox{otherwise.} \end{cases}
\end{equation}

\noindent
Indeed, a leaf $v$ with threshold $t(v)=1$ is influenced in the one-node subtree $T_v$ only when either  $p_{v,\ell,t}(v)=1$   ($\ell=0$),   or  for some $1\leq \ell \leq \lambda$,   it is influenced by its parent (i.e., the residual threshold $t=t(v)-1=0$).

For any internal node $v$, we show how to compute each  value $\MIS[v,\ell,t]$   in time $O(d(v)\cdot t\cdot\lambda)$.

%
%

In the following we   assume that an arbitrary order has been fixed on the $d=d(v)-1$ children of any node $v$, that is,  we
denote them as $v_1, v_2, \dots, v_d,$ according to the fixed order. Also, we define $F(v, i)$ to be the forest consisting of the subtrees rooted at the
first $i$ children of $v.$ 
 We will also use $F(v,i)$ to denote the set of nodes it includes.

\begin{definition} \label{def:Amax}
Let $v$ be a node with $d$ children and let $\ell =0, 1, \dots, \lambda.$
For $i=0,\ldots, d$, $j=0,1,\dots, t(v),$ we define
$A_{v, \ell}[i,j]$ (resp.\  $A_{v, \ell}[\{i\},j]$)
 to be the minimum cost for influencing all nodes in $F(v,i)$, 
(resp.\ $T_{v_i}$)
within $\lambda$ rounds, assuming that:

i) if $\ell\neq \lambda$, at time $\ell+1$ the threshold of $v_{k}$ is   $t'(v_{k}),$ for each $k = 1, \dots, i$;

ii) if $\ell\neq 0$, at least $j$ nodes in  $\{v_1,v_2,\ldots,v_i\}$  (resp. $\{v_i\}$) are influenced by round $\ell -1$, that is
\\
 \centerline{$|\{v_1,v_2,\ldots,v_i\}   \cap \Infl[\pi_{v,\ell,i,j},\ell-1]| \geq j$,} 
\hphantom{aaaaa} where $\pi_{v,\ell,i,j}: F(v, i) \rightarrow \N_0$ denotes the incentive function attaining  $A_{v, \ell}[i,j].$
\\
We also define $A_{v,\ell}[i,j]=  \infty$ 
 when the above constraints are not satisfiable.
\end{definition}
{By decomposing a solution according to how many nodes in $C(v)$ are
influenced prior to the root $v$ being influenced and denoting this number as $j$, the
remaining cost to influence the root $v$ is $t - j$ Hence, we can easily write $\MIS[v,\ell,t]$ in terms of $A_{v,\ell}[d,j]$ as follows.}

\begin{lemma} \label{prop:Amax}
For each node $v$ with $d$ children, each $\ell=0,\ldots, \lambda$   and each
$t\in \{t(v),t'(v)\}$
\begin{equation}\label{eq-Amax}
\MIS[v,\ell,t] = \begin{cases} 
t+A_{v,0}[d,0] & \text{ if }\ell=0\\
 \min_{0\leq j \leq t}  \Big\{t-j+ A_{v,\ell}[d,j]\Big\} & \text{ otherwise.}
\end{cases}
\end{equation}
\end{lemma}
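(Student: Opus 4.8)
The plan is to prove the identity by a direct decomposition argument that treats the root $v$ and the forest $F(v,d)$ of its children's subtrees as two interacting parts of any feasible solution, the coupling variable being the number $j$ of children of $v$ that are influenced early enough to help influence $v$. The case $\ell = 0$ is immediate: to be influenced at round $0$ the node $v$ receives no help from its neighbours, so the feasibility constraint forces $p(v) = t$; once $v$ is influenced at round $0$ it can assist its children from round $1 = \ell + 1$, which is exactly what the reduction of each child's threshold to $t'$ at time $\ell + 1$ in Definition~\ref{def:Amax} encodes. Since condition (ii) is vacuous for $\ell = 0$, the residual cost of influencing all the subtrees is $A_{v,0}[d,0]$, giving $\MIS[v,0,t] = t + A_{v,0}[d,0]$.

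For $\ell \geq 1$ I would establish the two inequalities separately. For the direction $\MIS[v,\ell,t] \leq \min_{0\le j\le t}\{t-j+A_{v,\ell}[d,j]\}$, I would fix $j$ with $0 \leq j \leq t$, take an optimal function $\pi$ attaining $A_{v,\ell}[d,j]$ on $F(v,d)$, and extend it to $T_v$ by setting $p(v) = t - j$. Condition (ii) guarantees that at least $j$ children are influenced by round $\ell - 1$, so $v$ accumulates $(t - j) + j = t$ units and is influenced by round $\ell$ with threshold $t$; condition (i) then makes $v$'s reduction of its children's thresholds from round $\ell + 1$ consistent, so $p$ is feasible for $\MIS[v,\ell,t]$ at cost $t - j + A_{v,\ell}[d,j]$, and minimising over $j$ finishes this direction. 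For the reverse direction, I would take an optimal $p$ for $\MIS[v,\ell,t]$, let $c = |C(v) \cap \Infl_{F(v,d)}[p,\ell-1]|$, note that optimality forces $p(v) = \max\{0, t - c\}$, and set $j = \min\{c, t\}$ so that $0 \leq j \leq t$ and $p(v) = t - j$. The restriction of $p$ to $F(v,d)$ is feasible for $A_{v,\ell}[d,j]$ (at least $c \geq j$ children are influenced by round $\ell-1$, and $v$ influences its children from round $\ell + 1$ on), whence the residual cost is at least $A_{v,\ell}[d,j]$ and $\MIS[v,\ell,t] = p(v) + \sum_{u \in F(v,d)} p(u) \geq t - j + A_{v,\ell}[d,j]$.

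The main obstacle, I expect, is the round-by-round bookkeeping between the two definitions rather than any deep idea: I must verify that ``$v$ influenced by round $\ell$'' in $\MIS$ corresponds exactly to ``children's thresholds reduced at time $\ell + 1$'' in $A$, in both directions, taking care of the boundary conditions $\ell = \lambda$ (no reduction) and $\ell = 0$ (no early-influence constraint). A second delicate point is justifying the truncation of the minimisation to $j \leq t$: when more than $t$ children are influenced early, the surplus cannot lower the incentive at $v$ (which is already $0$), and the clamp $j = \min\{c, t\}$ together with the observation that a solution influencing $c \geq t$ children early also satisfies the weaker constraint ``at least $t$ children early'' shows that taking $j = t$ loses nothing. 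Everything else reduces to matching the cost $p(v) + \sum_{u \in F(v,d)} p(u)$ against $t - j + A_{v,\ell}[d,j]$.
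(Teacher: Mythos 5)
Your proposal is correct and follows essentially the same route as the paper: both decompose an optimal solution according to the number $j$ of children of $v$ influenced by round $\ell-1$, identify $p(v)=t-j$, and match the residual cost on $F(v,d)$ with $A_{v,\ell}[d,j]$ (the paper writes this as a chain of equalities between constrained minima, whereas you split it into two inequalities, but the content is identical). The bookkeeping points you flag (the $\ell=0$ boundary and the clamp $j=\min\{c,t\}$) are handled implicitly in the paper in exactly the way you describe.
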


\begin{proof}{}
The statement directly follows from Definitions \ref{defi:MIS} and \ref{def:Amax}. In fact when $\ell>0$ we have
\begin{eqnarray*}
\MIS[v,\ell,t]  &=& \min_{\substack{p: T_v  \to\N_0, \ \Infl_{T_v}[p,\lambda]= T_v \\ |C(v)   \cap \Infl_{F(v,d)}[p,\ell-1]| \geq t-p(v) }  } \left \{ \sum_{v\in T_v} p(v) \right\}\\
 &=& \min_{0 \leq j \leq t} \left\{  \min_{
\substack{p: T_v   \to\N_0, \ \Infl_{T_v}[p,\lambda]= T_v \\ p(v)=t-j, \ |C(v)   \cap \Infl_{F(v,d)}[p,\ell-1]| \geq j  }  
 } \left \{ \sum_{v\in T_v} p(v) \right\}\right\}\\
 &=& \min_{0 \leq j \leq t} \left\{ t-j+  \min_{\substack{p: F(v,d)  \to\N_0,  \Infl_{F(v,d)}[p,\lambda]= F(v,d) \\ |C(v)   \cap \Infl_{F(v,d)}[p,\ell-1]| \geq j }  } \left \{ \sum_{v\in  F(v,d) } p(v) \right\}\right\}\\
 &=& \min_{0 \leq j \leq t} \left\{ t-j+ A_{v,\ell}[d,j]  \right\}.
\end{eqnarray*}
where the last equality is due to the fact that the incentive on $v$ (i.e., $p(v)=t-j$) and the influence from the children (i.e., $|C(v)   \cap \Infl_{F(v,d)}[p,\ell-1]| \geq j$)  are enough to influence $v$ at round $\ell$ and consequently at round $\ell+1$ the threshold of $v_{k}$ becomes   $t'(v_{k}),$ for each $k = 1, \dots, d$.
Similarly for $\ell=0$ we have 
\begin{eqnarray*}
\MIS[v,0,t]  &=& \min_{\substack{p: T_v  \to\N_0, p(v)=t\\ \Infl_{T_v}[p,\lambda]= T_v  }  } \left \{ \sum_{v\in T_v} p(v) \right\}\\ &=&t+\min_{\substack{p:V(T_v) \to\N_0\\ \Infl_{F(v,d)}[p,\lambda]= F(v,d)  }  } \left \{ \sum_{v\in F(v,d)} p(v) \right\}\\
&=& t+A_{v,\ell}[d,0].
\end{eqnarray*}\qed
\end{proof}

\begin{lemma}\label{lemma2}
For each node $v$,  each $t{\in}\{t(v),t'(v)\}$, and each $\ell{=}1,\ldots, \lambda$, it is possible to
 compute $A_{v,\ell}[d,t]$,  as well as  $A_{v,0}[d,0]$,
 recursively in time $O(\lambda d t)$  where $d$ is the number of children of $v$.
\end{lemma}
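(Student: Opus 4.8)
The plan is to compute $A_{v,\ell}[d,t]$ by a secondary dynamic program that processes the children $v_1,\ldots,v_d$ of $v$ one at a time, filling in the table $A_{v,\ell}[i,j]$ for $0\le i\le d$ and $0\le j\le t$. The guiding observation is that the subtrees $T_{v_1},\ldots,T_{v_d}$ are vertex-disjoint, so that both the cost $\sum p$ and the two constraints of Definition~\ref{def:Amax} decompose additively over the children: the threshold reduction of condition (i) is imposed on each $v_k$ separately, and the counting condition (ii) is just a sum of per-child indicators of the event ``$v_k$ is influenced by round $\ell-1$''. This reduces the forest problem to a knapsack-style recurrence over the children, with the per-child ``items'' being the single-subtree quantities $A_{v,\ell}[\{i\},0]$ and $A_{v,\ell}[\{i\},1]$.

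First I would express these single-subtree quantities in terms of the values $\MIS[v_i,\cdot,\cdot]$, which are already available thanks to the post-order traversal. For $j_i=1$ the child $v_i$ must be influenced by round $\ell-1$, i.e.\ strictly before round $\ell+1$ when its parent could first help it, so the reduction is irrelevant and $A_{v,\ell}[\{i\},1]=\min_{0\le \ell_i\le \ell-1}\MIS[v_i,\ell_i,t(v_i)]$. For $j_i=0$ the child may be influenced at any round: if at round $\ell_i\le\ell$ it must use its full threshold, whereas if at round $\ell_i\ge\ell+1$ it may absorb the unit of influence coming from $v$, giving
$$A_{v,\ell}[\{i\},0]=\min\Big\{\min_{0\le \ell_i\le \ell}\MIS[v_i,\ell_i,t(v_i)],\ \min_{\ell+1\le \ell_i\le \lambda}\MIS[v_i,\ell_i,t'(v_i)]\Big\}.$$
The case $\ell=0$ is the analogue of the $j_i=0$ formula in which the round-$0$ term uses $t(v_i)$ and all later rounds use $t'(v_i)$. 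Each such value is a minimum over $O(\lambda)$ precomputed entries, hence costs $O(\lambda)$ per child.

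Next I would set up the combining recurrence. For $\ell\ge1$ the base case is $A_{v,\ell}[0,0]=0$ and $A_{v,\ell}[0,j]=\infty$ for $j>0$, and for $i\ge1$
$$A_{v,\ell}[i,j]=\min\Big\{A_{v,\ell}[i-1,j]+A_{v,\ell}[\{i\},0],\ A_{v,\ell}[i-1,j-1]+A_{v,\ell}[\{i\},1]\Big\},$$
which records only whether $v_i$ is counted among the $j$ children influenced by round $\ell-1$. Correctness is exactly the additive decomposition above: having $j$ influenced children in $F(v,i)$ means having $j$ or $j-1$ influenced children among $v_1,\ldots,v_{i-1}$ according to whether $v_i$ is one of them, and the costs add because the subtrees are disjoint. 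For $\ell=0$ condition (ii) is vacuous, so only the row $j=0$ is needed and $A_{v,0}[d,0]=\sum_{i=1}^d A_{v,0}[\{i\},0]$. For the running time, computing the two single-subtree values costs $O(\lambda)$ per child and filling the $t+1$ entries of column $i$ costs $O(t)$ per child, so the total is $O(d(\lambda+t))\subseteq O(\lambda d t)$; the $\ell=0$ computation of $A_{v,0}[d,0]$ costs only $O(\lambda d)$, within the same bound.

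I expect the main obstacle to be justifying the single-subtree formulas rather than the knapsack recurrence, which is routine. Specifically, I must verify that substituting $t'(v_i)$ for $t(v_i)$ in $\MIS$ is legitimate exactly when $v_i$ is influenced at a round $\ge\ell+1$ (so that the parent, influenced by round $\ell$, is genuinely contributing its unit), and that this substitution must \emph{not} be used when $v_i$ is influenced at round $\le\ell$. This is the one place where the ``$v$ influences from round $\ell+1$'' semantics of condition (i) interacts delicately with the timing parameter inside the $\MIS$ values, and it is where the case split on $\ell_i$ must be argued carefully; the monotonicity of $\MIS$ in its threshold argument then guarantees that always taking the $t'$ option when it is available is optimal.
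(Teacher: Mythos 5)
Your proposal is correct and follows essentially the same route as the paper's proof: the same single-subtree formulas for $A_{v,\ell}[\{i\},0]$ and $A_{v,\ell}[\{i\},1]$ in terms of the precomputed $\MIS$ values (including the careful split on whether $v_i$ is influenced before or after round $\ell$, which governs when the reduced threshold $t'(v_i)$ may be used), the same knapsack-style recurrence over the children, and the same treatment of $\ell=0$ as a sum of independent per-child minima. Your running-time accounting is in fact slightly sharper ($O(d(\lambda+t))$ by computing each per-child minimum once) than the paper's $O(\lambda d t)$, but both establish the claimed bound.
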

 %

\begin{proof}{}
For $\ell=0$ the value $A_{v,0}[d,0]$ can be computed considering a complete independence among the influence processes in the different subtrees of $F(v,d)$. Moreover, for each children $v_i$ of $v$ we can consider a reduced threshold $t'(v_i)$ from round $1$. Hence we have
\begin{equation} \label{eq_7_1}
A_{v,0}[d,0]=\sum_{v_i \in C(v)} \min \left\{ \MIS[v_i,0,t(v_i)], \min_{1 \leq \ell' \leq \lambda} \{\MIS[v_i,\ell',t'(v_i)]\} \right\}
\end{equation}
For $1\leq \ell \leq \lambda$, we can  compute $A_{v,\ell}[d, t]$ by recursively computing the values  $A_{v,\ell}[i,j]$ for each $i=0,1,\ldots,d$ and  $j= 0,1,\ldots,t,$ as follows.
First we show how to compute the values $A_{v,\ell}[\{i\}, j]$.
\begin{equation}\label{Amaxsingle}
A_{v,\ell}[\{i\}, j] = \begin{cases}  \infty    & \text{ if } j>1 \\ 
\displaystyle{\min_{0 \leq \ell' \leq \ell-1} \{\MIS[v_i,\ell',t(v_i)]  \}}  & \text{ if } j=1 \\	
\displaystyle{\min  \Big\{\min_{0 \leq \ell' \leq \ell} \MIS[v_i,\ell',t(v_i)], \min_{\ell{+}1 \leq \ell' \leq \lambda} \MIS[v_i,\ell',t'(v_i)]\Big \} } & \text{ if  } j=0 
\end{cases}
\end{equation}
Indeed, for $j>1$ a single node $v_i$ can not satisfy the condition ii) of definition \ref{def:Amax}.
For $j=1$, the condition ii) of definition \ref{def:Amax} forces the influence of $v_i$, before the influence of $v$, with the original threshold  $t(v_i)$  (because in this case $v$ can not influence $v_i$).
Finally, for $j=0$ there are two possible choices. The node $v_i$ can be influenced at any round with the original threshold $t(v_i)$ or after round $\ell$ exploiting the influence of $v$ (that is using a reduced threshold $t'(v_i)$).
Recalling that $\MIS[v,\ell,t]$ is monotonically non-decreasing in $t$ we obtain the equation in (\ref{Amaxsingle}).

The following recursive equation enables to compute the values $A_{v,\ell}[d, t]$ in time $O(d t\lambda)$.
For $i=1, $ we set  $ A_{v,\ell}[1,j]=A_{v,\ell}[\{1\},j]$;
\begin{equation}\label{Amax}
A_{v,\ell}[i, j] = \begin{cases} 0    & \text{ if } i=j=0 \\ 
\infty & \text{ if } i<j\\	
\displaystyle{ \min \bigg\{ A_{v,\ell}[i{-}1,j{-}1] + \min_{0 \leq \ell' \leq \ell-1} \{\MIS[v_i,\ell',t(v_i)]\}},& \text{ otherwise. }\\
\displaystyle{\ \ A_{v,\ell}[i{-}1,j] {+} \min  \Big\{\min_{0 \leq \ell' \leq \ell} \MIS[v_i,\ell',t(v_i)], \min_{\ell{+}1 \leq \ell' \leq \lambda} \MIS[v_i,\ell',t'(v_i)]\Big \}\bigg\}}  \end{cases}
\end{equation}

Let $i \in \{2,\dots, d\}$. In order to compute $A_{v, \ell}[i, j],$ 
there are two possibilities to consider:

\begin{description}
	\item[{\textsc{I}})]  $v_i$ has to contribute to condition ii) of definition \ref{def:Amax}. Hence, $v_i$ has to be influenced before round $\ell$ and  cannot use the reduced threshold.
		\item[{\textsc{II}})] $v_i$ does not contribute to condition ii) of definition \ref{def:Amax} (i.e., the condition on the influence
	brought to $v$ from $v_1, \dots, v_i$ at round $\ell-1$ is already satisfied by $v_1, \dots, v_{i-1}).$
	In this case we have no constraint on  when $v_i$ is influenced, and we can use a reduced threshold from round $\ell+1$;
\end{description}
Therefore, for $i> 1$ and for each $i \leq j \leq t$ we can compute
$A_{v,\ell}[i,j]$ using the following formula:
\begin{equation} \label{eqBmax-tot}
A_{v,\ell}[i,j]=\min \Big\{ A_{v,\ell}^{\textsc{i}}[i,j], A_{v,\ell}^{\textsc{ii}}[i,j] \Big\}\,,
\end{equation}
where $A_{v,\ell}^{\textsc{i}}[i,j]$ and $A_{v,\ell}^{\textsc{ii}}[i,j]$ denote the corresponding optimal values of the two restricted subproblems described above.
It holds that 
$$A_{v,\ell}^{\textsc{i}}[i,j] =  A_{v,\ell}[i{-}1,j{-}1] + A_{v,\ell}[\{i\},1].$$
Hence, by equation (\ref{Amaxsingle}) we have 
\begin{equation} \label{eqBmax-b}
A_{v,\ell}^{\textsc{i}}[i,j] = A_{v,\ell}[i{-}1,j{-}1]  +\min_{0 \leq \ell' \leq \ell-1} \Big\{\MIS[v_i,\ell',t(v_i)]\Big\}.
\end{equation}
Analogously, it holds that 
$$A_{v,\ell}^{\textsc{ii}}[i,j] = A_{v, \ell}[i-1, j] + A_{v, \ell}[\{i\}, 0].$$
Hence, by equation (\ref{Amaxsingle}) we have 
\begin{eqnarray}\label{eqBmax-a}
A_{v,\ell}^{\textsc{ii}}[i,j] = \left \{ A_{v, \ell}[i-1, j] + \min  \Big\{\min_{0 \leq \ell' \leq \ell} \MIS[v_i,\ell',t(v_i)], \min_{\ell{+}1 \leq \ell' \leq \lambda} \MIS[v_i,\ell',t'(v_i)]\Big \} 
 \right\}.
\end{eqnarray}
%
%
%

\noindent
\textbf{Complexity.} From equations (\ref{Amaxsingle})-(\ref{eqBmax-a}), 
it follows that the computation of $A_{v,\ell}[\cdot,\cdot]$ comprises $O(dt)$ values and 
each one is computed recursively in time $O(\lambda)$. Hence we are able to compute it
in time $O(\lambda d t)$. \qed
\end{proof}

Lemmas \ref{prop:Amax} and \ref{lemma2}   imply that for each  $v \in V,$   for each $\ell=0,\ldots, \lambda$, and  $t \in \{t'(v),t(v)\}$, the value $\MIS[v,\ell,t]$ can be computed recursively in time $O(\lambda d(v) t(v))$. Hence, the value in (\ref{eq-mas})
 can be computed in time\\
$\sum_{v \in V} O(\lambda d(v)t(v)) \times O(\lambda)=O(\lambda^2\Delta)\times\sum_{v \in V} O(d(v))=O(\lambda^2 \Delta n),$\\
where $\Delta$ is the maximum node degree.
Standard backtracking techniques can be used to compute the (optimal) influence function $p^*$ that
influences all of the nodes in the same $O(\lambda^2 \Delta n)$ time.


\remove{
\section{A parameterized algorithm for general graphs } \label{parametr}

We also 
consider the effect of treewidth on the  complexity of the  {\sc TBI} problem and show that it is possible to extend  
  the results given in  \cite{BHLN11} for the TSS problem. Namely,
we give  an algorithm that runs in $n^{O(w)}$, where $w-1$ is the treewidth of the input graph (recall that a  tree has  treewidth  1). 

Given a graph $G=(V,E)$, we consider a nice tree decomposition  
$(\T , \X )$ of $G$, where $\X$ is a family of subsets of $V$
and $\T$ is a tree over $\X$ \textcolor[rgb]{0.55,0,0}{\cite{?????}.}
The algorithm follows a dynamic programming approach computing,for each node $X \in \X$,  a table   that depends on the  thresholds of the nodes in $X$ and on the rounds in which such nodes are influenced.
The algorithm proceeds in post-order fashion of the tree decomposition.
Each entry in the table stores the minimum cost for influencing the subgraph of $G$ induced by the nodes of the subtree of $\T$ rooted at $X$.
We obtain   the following result.
\begin{theorem}\label{teo-tw}
In graphs of treewidth $w-1$ the  {\sc  TBI}  problem can be solved in $n^{O(w)}$ time.
\end{theorem}
}



\end{document}